\newcommand{\cmark}{\ding{51}}%
\newcommand{\xmark}{\ding{55}}%
\newcommand{\Simulink}{{Simulink\textsuperscript{\textregistered}}}
\newcommand{\Mathworks}{{Mathworks\textsuperscript{\textregistered}}}
\newcommand{\toolname}{{\sc DryVR}}
\newcommand{\edgelab}{{\sc elab}}
\newcommand{\vertlab}{{\sc vlab}}
\newcommand{\modemap}{{\sc lmap}}
\newcommand{\simulator}{{\sc sim}}
\newcommand{\dom}{{\mathit dom}}
\newcommand{\fstate}{\mathit{fstate}}
\newcommand{\ltime}{\mathit{ltime}}
\newcommand{\lstate}{\mathit{lstate}}
\newcommand{\lmode}{\mathit{lmode}}
\newcommand{\fmode}{\mathit{fmode}}
\newcommand{\reach}[1]{\relax\ifmmode {\sf Reach}_{#1} \else ${\sf Reach}_{#1}$\fi}
\newcommand{\Reach}[1]{\relax\ifmmode {\sf Reach}_{#1} \else ${\sf Reach}_{#1}$\fi}
\newcommand{\post}[4]{\relax\ifmmode {\sf Post}_{#1}(#2,#3,#4) \else ${\sf Post}_{#1}(#2,#3,#4)$\fi}
\newcommand{\posti}[1]{\relax\ifmmode {\sf Post}_{#1} \else ${\sf Post}_{#1}$\fi}
\newcommand{\paths}[1]{\relax\ifmmode {\sf Paths}_{#1} \else ${\sf Paths}_{#1}$\fi}
\newcommand{\pathof}[1]{\relax\ifmmode {\sf path}({#1}) \else ${\sf path}({#1})$\fi}
\newcommand{\execs}[1]{\relax\ifmmode {\sf Execs}_{#1} \else ${\sf Exec}_{#1}$\fi}
\newcommand{\traces}[1]{\relax\ifmmode {\sf Trace}_{#1} \else ${\sf Trace}_{#1}$\fi}
\newcommand{\GraphReach}{\mathit{GraphReach}}
\newcommand{\VerInit}{\mathit{VerInit}}
\newcommand{\Lmode}[1]{\relax\ifmmode {\sf {#1}} \else ${\sf {#1}}$\fi}
\newfont{\mycrnotice}{ptmr8t at 7pt}
\newfont{\myconfname}{ptmri8t at 7pt}
\newcommand{\computeRT}{\mathit{ReachComp}}
\theoremstyle{plain}
\newtheorem{theorem}{Theorem}[section]
\theoremstyle{definition}
\newtheorem{definition}[theorem]{Definition}
\theoremstyle{example}
\newtheorem{proposition}[theorem]{Proposition}
\newtheorem{remark}[theorem]{Remark}
\begin{document}
\title{\toolname: Data-driven verification and compositional reasoning for automotive  systems}
\author{Chuchu Fan \and Bolun Qi \and Sayan Mitra \and Mahesh Viswanathan\\University of Illinois at Urbana-Champaign}
\date{}
\maketitle
\begin{abstract}
	We present the \toolname\ framework for verifying  hybrid control systems that are described by a combination of a black-box simulator for trajectories and a white-box transition graph specifying mode switches. 
	The framework  includes (a) a probabilistic algorithm for learning  sensitivity of the continuous trajectories  from simulation data, 
	(b) a bounded reachability analysis algorithm that uses the learned sensitivity, and 
	(c) reasoning techniques based on simulation relations and sequential composition, that enable verification of complex systems under long switching sequences, from the reachability analysis of a simpler system under shorter sequences. 
	 We demonstrate the utility of the framework by verifying a suite of automotive benchmarks that include powertrain control, automatic transmission, and several autonomous and ADAS features like automatic emergency braking, lane-merge, and auto-passing controllers.
\end{abstract}
%

\section{Introduction}
\label{sec:intro}

The starting point of existing hybrid system verification approaches
is the availability of nice mathematical models describing the
transitions and trajectories. This central conceit severely restricts
the applicability of the resulting approaches. Real world control
system ``models'' are typically a heterogeneous mix of simulation
code, differential equations, block diagrams, and hand-crafted look-up
tables. Extracting clean mathematical models from these descriptions
is usually infeasible.
At the same time, rapid developments in Advanced Driving Assist Systems (ADAS), autonomous vehicles, robotics, and drones now make the need  for effective and sound verification algorithms stronger than ever before. 
The \toolname\ framework  presented in this paper aims to narrow the gap between sound  and practical verification for control systems.
\vspace{-10pt}
\paragraph{Model assumptions}
 Consider an ADAS feature like automatic emergency braking system (AEB). The high-level logic deciding the timing of when and for how long the brakes are engaged after an obstacle is detected by sensors is implemented in a relatively clean piece of code and this logical module can be seen as a {\em white-box\/}. In contrast, the dynamics of vehicle itself, with  hundreds of parameters, is more naturally viewed as a {\em black-box\/}. That is, it can be simulated or tested with different initial conditions and  inputs, but it is nearly impossible to write down a nice mathematical model.

 The empirical observation motivating this work is that many  control systems, and especially automotive systems, share this  combination of  white  and  black boxes (see other examples in Sections~\ref{ex:powertrain}, \ref{ssec:adas}, and \ref{ssec:gear}).
 In this paper, we view hybrid systems as a combination of a white-box that specifies the mode switches and a black-box that can  simulate the continuous evolution in each mode. 
 Suppose the system has a set of modes $\L$ and $n$ continuous variables. The mode switches are defined by a {\em transition graph} $G$ which is a directed acyclic graph (DAG) whose vertices and edges define the allowed mode switches and the switching times. The black-box is a set of trajectories $\TL$ in $\reals^n$ for each mode in $\L$. 
 We do not have a closed form description of $\TL$, but instead, we have a {\em simulator\/}, that can generate sampled data points on individual trajectories for a given initial state and mode.  
Combining a transition graph $G$, a set of trajectories $\TL$, and a
set of initial states in $\reals^n$, we obtain a hybrid system for
which executions, reachability, and trace containment can be defined
naturally.

We have studied a suite of automotive systems such as powertrain
control~\cite{jin2014powertrain}, automatic transmission
control~\cite{Matlab_trans}, and ADAS features like automatic
emergency braking (AEB), lane-change, and auto-passing, that
are naturally represented in the above style. In verifying a lane
change or merge controller, once the maneuver is activated, the mode
transitions occur within certain time intervals. In testing a
powertrain control system, the mode transitions are brought about by
the driver and it is standard to describe typical driver classes using
time-triggered signals. Similar observations hold in other examples.

\vspace{-10pt}
\paragraph{Safety verification algorithm}
With black-box modules in our hybrid systems, we address the  challenge of providing guaranteed verification. Our approach is based on the idea of simulation-driven reachability analysis~\cite{fan2016automatic,DMV:EMSOFT2013,DuggiralaMV:2015c2e2}. For a given mode $\ell \in \L$, finitely many simulations of the trajectories of $\ell$ and a {\em discrepancy function\/} bounding the sensitivity of these trajectories, is used to over-approximate the reachable states.
For the key step of computing  discrepancy for modes that are now represented by  black-boxes, we introduce a probabilistic algorithm that learns the parameters of exponential discrepancy functions from simulation data. The algorithm transforms the problem of learning the parameters of the discrepancy function to the problem of learning a linear separator for a set of points in $\reals^2$ that are obtained from transforming the simulation data. 
A classical result in PAC learning, ensures that any such discrepancy function works with high probability for all trajectories. We performed dozens of experiments with a variety of black-box simulators and observed that 15-20 simulation traces typically give a discrepancy function that works for nearly 100\% of all simulations.
The reachability algorithm  for the hybrid system proceeds along the vertices of the transition graph in a topologically sorted order and this gives a sound bounded time verification algorithm, provided the learned discrepancy function is correct. 



\vspace{-10pt}
\paragraph{Reasoning}
White-box transition graphs in our modelling, identify the switching
sequences under which the black-box modules are exercised.  Complex
systems have involved transition graphs that describe subtle sequences
in which the black-box modules are executed. To enable the analysis of
such systems, we identify reasoning principles that establish the
safety of system under a complex transition graph based on its safety
under a simpler transition graph. We define a notion of forward
simulation between transition graphs that provides a sufficient
condition of when one transition graph ``subsumes'' another --- if
$G_1$ is simulated by $G_2$ then the reachable states of a hybrid
system under $G_1$ are contained in the reachable states of the system
under $G_2$. Thus the safety of the system under $G_2$ implies the
safety under $G_1$. Moreover, we give a simple polynomial time
algorithm that can check if one transition graph is simulated by
another.

Our transition graphs are acyclic with transitions having bounded
switching times. Therefore, the executions of the systems we analyze
are over a bounded time, and have a bounded number of mode
switches. An important question to investigate is whether establishing
the safety for bounded time, enables one can conclude the safety of
the system for an arbitrarily long time and for arbitrarily many mode
switches. With this in mind, we define a notion of sequential
composition of transition graphs $G_1$ and $G_2$, such that switching
sequences allowed by the composed graph are the concatenation of the
sequences allowed by $G_1$ with those allowed by $G_2$. Then we prove
a sufficient condition on a transition graph $G$ such that safety of a
system under $G$ implies the safety of the system under arbitrarily
many compositions of $G$ with itself.

\vspace{-10pt}
\paragraph{Automotive applications}
We have implemented these ideas to create the {\bf D}ata-d{\bf r}iven
S{\bf y}stem for {\bf V}erification and {\bf R}easoning (\toolname).
The tool is able to automatically verify or find counter-examples in a
few minutes, for all the benchmark scenarios mentioned above.
Reachability analysis combined with compositional reasoning, enabled
us to infer safety of systems with respect to arbitrary transitions
and duration.
%
%


\vspace{-10pt}
\paragraph{Related work}
Most automated verification tools for hybrid systems rely on analyzing
a white-box mathematical model of the systems. They include tools
based on decidablity
results~\cite{doty95,hh95,ck99,adm02,Dutertre04timedsystems,fre05},
semi-decision procedures that over-approximate the reachable set of
states through symbolic
computation~\cite{gm99,mt00,bt00,kv00,st00,Frehse:cav11,ariadne,flow},
using
abstractions~\cite{adi03,efhkost03-1,efhkost03-2,Henzinger_refinement,seg07,07-HSolver,dkl07,JKWB:HSCC:2007,HareFMSD,14-sttt,14-AGAR,15-PLC-CEGAR,HareTACAS16},
and using approximate decision procedures for fragments of first-order
logic~\cite{dreach}.
More recently, there has been interest in developing simulation-based
verification
tools~\cite{Julius:2007:RTG:1760804.1760833,SensitivityDM,donze2010breach,Kanade09,staliro-tool-paper,Fainekos:2009:RTL:1609208.1609591,DRJqest13,DuggiralaMV:2015c2e2}. Even though
these are simulation based tools, they often rely on being to analyze
a mathematical model of the system. The type of analysis that they
rely on include instrumentation to extract a symbolic trace from a
simulation~\cite{Kanade09}, stochastic optimization to search for
counter-examples~\cite{staliro-tool-paper,Fainekos:2009:RTL:1609208.1609591},
and sensitivity
analysis~\cite{SensitivityDM,donze2010breach,DRJqest13,DuggiralaMV:2015c2e2}. Some
of the simulation based techniques only work for systems with linear
dynamics~\cite{Sim2Veri,ILABS}.  Recent work on the APEX tool~\cite{o2016apex} for verifying trajectory planning and tracking in autonomous vehicles is related our approach in that it targets the same application domain.

\section{Modeling/semantic framework}
\label{sec:prelims}

We introduce a powertrain control system from~\cite{jin2014powertrain}
as a running example to illustrate the elements of our hybrid system
modeling framework.

\subsection{Powertrain  control system}
\label{ex:powertrain}

This system ($\auto{Powertrn}$) models a highly nonlinear engine control system. The relevant state variables of the model are intake manifold pressure ($p$), air-fuel ratio ($\lambda$), estimated manifold pressure ($pe$) and intergrator state ($i$).
The overall system can be in one of four modes
$\Lmode{startup}$,
$\Lmode{normal}$,
$\Lmode{powerup}$, 
$\Lmode{sensorfail}$. 
A \Simulink\ diagram describes the continuous evolution of the above variables. 
In this paper, we mainly work on the {\em Hybrid I/O Automaton Model}
in the suite of powertrain control models. The \Simulink\ model
consists of continuous variables describing the dynamics of the
powertrain plant and sample-and-hold variables as the controller.  One
of the key requirements to verify is that the engine maintains the
air-fuel ratio within a desired range in different modes for a given
set of driver behaviors. This requirement has implications on fuel
economy and emissions.
For testing purposes, the control system designers work with sets of driver profiles that essentially define families of switching signals across the different modes. 
Previous verification results on this problem have been reported in~\cite{DFMV:CAV2015, FDMV:ARCH2015} on a simplified version of the powertrain control model.

\vspace{-10pt}
\subsection{Transition graphs}
\label{ssec:modes}
We will use  $\L$ to denote a finite set of {\em modes\/} or locations of the system under consideration. The discrete behavior or mode transitions are specified by what we call a transition graph over $\L$.
\begin{definition}
	 A {\em transition graph\/} is a labeled, directed acyclic
         graph $G = \langle \L, \V, \E, \vertlab, \edgelab \rangle$,
         where
	\begin{inparaenum}[(a)]
	\item $\L$ is the set of vertex labels also called the set of
          {\em modes\/},
	\item $\V$ the set of vertices, 
	\item $\E\subseteq \V \times \V$ is the set of edges, 
	\item $\vertlab: \V \rightarrow \L$ is a vertex labeling
          function that labels each vertex with a mode, and
	\item $\edgelab: \E\rightarrow \nnreals \times \nnreals$ is an
          edge labeling function that labels each edge with a
          nonempty, closed, bounded interval defined by pair of
          non-negative reals.
	\end{inparaenum}
\end{definition}
Since $G$ is a DAG, there is a nonempty subset $\V_{\sf init}
\subseteq \V$ of vertices with no incoming edges and a nonempty subset
$\V_{\sf term} \subseteq \V$ of vertices with no outgoing edges.  We
define the set of initial locations of $G$ as $\L_{\sf init} = \{ \ell
\ | \ \exists \ v \in \V_{\sf init}, \vertlab(v) = \ell \}$.
A (maximal) {\em path\/} of the graph $G$ is a sequence $\pi = v_1,
t_1, v_2, t_2, \ldots, v_k$ such that,
\begin{inparaenum}[(a)]
\item $v_1 \in \V_{\sf init}$,
\item $v_k \in \V_{\sf term}$, and 
\item for each $(v_i,t_i,v_{i+1})$ subsequence, there exists $(v_i, v_{i+1}) \in \E$, and $t_i \in \edgelab((v_i,v_{i+1}))$.
\end{inparaenum}
$\paths{G}$ is the set of all possible paths of $G$.  For a given path
$\pi = v_1, t_1, v_2, t_2, \ldots,$ $v_k$ its {\em trace\/}, denoted by
$\vertlab(\pi)$, is the sequence $\vertlab(v_1), t_1, \vertlab(v_2),
t_2, \ldots,$ $\vertlab(v_k)$.  Since $G$ is a DAG, a trace of $G$ can
visit the same mode finitely many times.  $\traces{G}$ is the set of
all traces of $G$.
%
%
%

An example transition graph for the $\auto{Powertrain}$ system of Section~\ref{ex:powertrain} is shown in Figure~\ref{fig:power_graph}. The set of vertices $\V = \{0,\ldots, 4\}$ and the $\vertlab$'s and $\edgelab$'s appear adjacent to the vertices and edges. 
\begin{figure}[h]
	\includegraphics[scale=0.4]{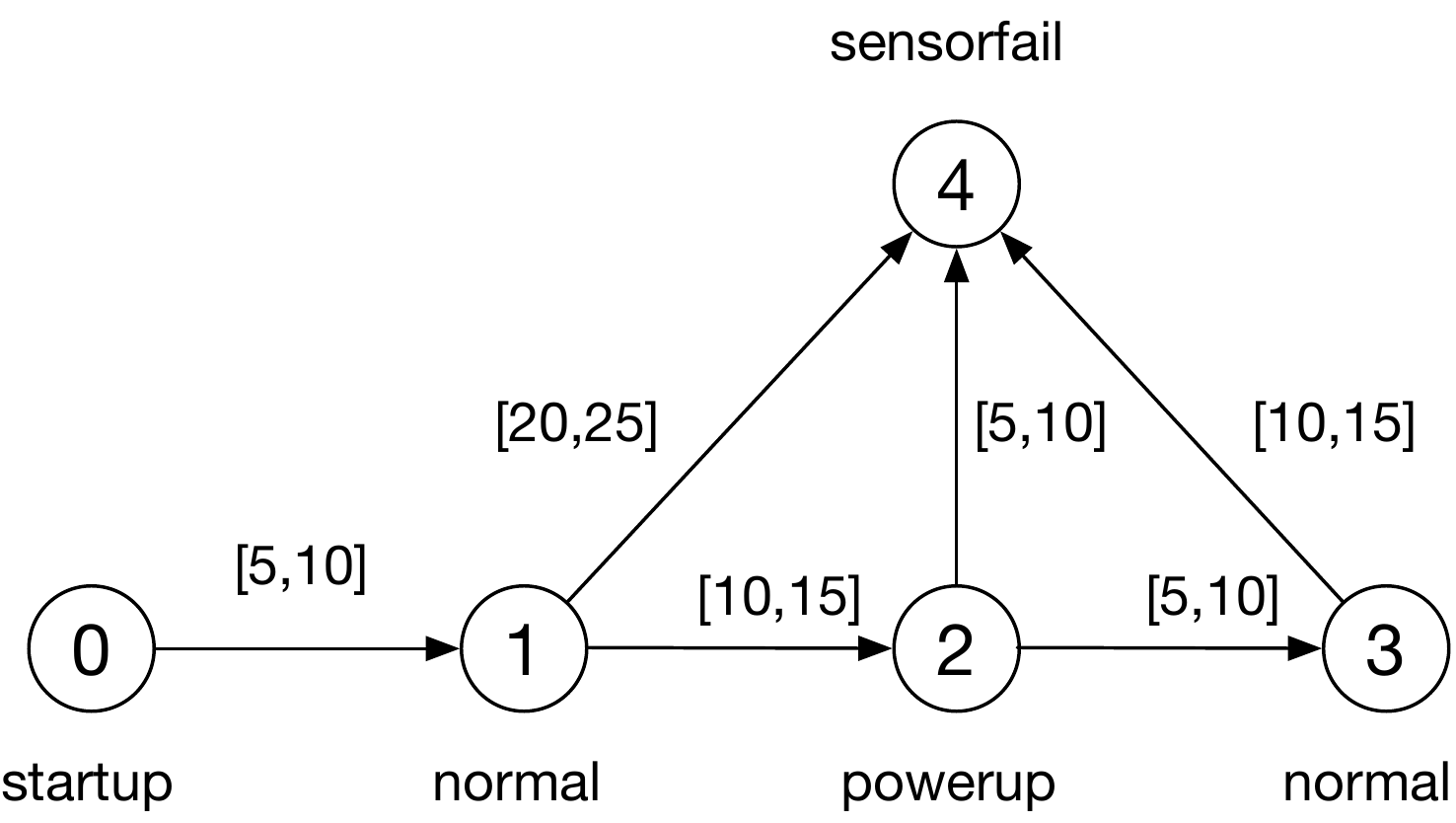}
	\vspace{-10pt}
	\caption{\small A sample transition graph for $\auto{Powertrain}$ system.}
	\label{fig:power_graph}	
\end{figure}
\vspace{-8pt}
\subsubsection{Trace containment}
We will develop reasoning techniques based on reachability,
abstraction, composition, and substitutivity. To this end, we will
need to establish containment relations between the behaviors of
systems. Here we define containment of transition graph traces.
Consider transition graphs $G_1, G_2,$ with modes $\L_1,\L_2,$ and a
mode map $\modemap: \L_1 \rightarrow \L_2$.  For a trace $\sigma =
\ell_1, t_1, \ell_2, t_2, \ldots, \ell_k \in \traces{G_1}$,
simplifying notation, we denote by $\modemap(\sigma)$ the sequence
$\modemap(\ell_1), t_1, \modemap(\ell_2), t_2,$ $\ldots,
\modemap(\ell_k)$.  We write $G_1 \preceq_{\modemap} G_2$ iff for
every trace $\sigma \in \traces{G_1}$, there is a trace $\sigma' \in
\traces{G_2}$ such that $\modemap(\sigma)$ is a prefix of $\sigma'$.

\begin{definition}
Given graphs $G_1, G_2$ and a mode map $\modemap: \L_1 \rightarrow
\L_2$, a relation $R \subseteq \V_1 \times \V_2$ is a {\em forward
  simulation relation from $G_1$ to $G_2$\/} iff
\begin{enumerate}[(a)]
\itemsep0em 
	\item for each $v \in \V_{1 {\sf init}}$, there is $u \in
          \V_{2 {\sf init}}$ such that $(v,u) \in R$,
        \item for every $(v,u) \in R$,  $\modemap(\vertlab_1(v)) =
          \vertlab_2(u)$, and
	\item for every $(v,v') \in \E_1$ and $(v,u)\in R$, there
          exists a finite set $u_1, \ldots, u_k$ such that:
	\begin{inparaenum}[(i)]
		\item for each $u_j$, $(v,u_j) \in R$, and
		\item $\edgelab_1((v,v')) \subseteq \cup_j
                  \edgelab_2((u,u_j))$.
	\end{inparaenum}
\end{enumerate}
\end{definition}

\begin{proposition}
	\label{prop:graphsim}
	If there exists a forward simulation relation from $G_1$ to
        $G_2$ with $\modemap$ then $G_1 \preceq_{\modemap} G_2$.
\end{proposition}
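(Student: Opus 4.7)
The plan is to do a straightforward induction on the length of a path in $G_1$, using the three clauses of the forward simulation definition to construct a matching path in $G_2$ whose trace has $\modemap(\sigma)$ as a prefix.

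First I would fix a trace $\sigma \in \traces{G_1}$ and lift it to an underlying path $\pi = v_1, t_1, v_2, t_2, \ldots, v_n$ in $G_1$ with $v_1 \in \V_{1,\sf init}$ and $v_n \in \V_{1,\sf term}$. The goal is to build, by induction on $i$, a sequence $u_1, t_1, u_2, t_2, \ldots, u_i$ in $G_2$ (reusing the same switching times $t_j$) such that $(v_j, u_j) \in R$ for all $j \le i$ and each consecutive pair satisfies $(u_j, u_{j+1}) \in \E_2$ with $t_j \in \edgelab_2((u_j, u_{j+1}))$. For the base case, clause (a) of the simulation definition supplies $u_1 \in \V_{2,\sf init}$ with $(v_1, u_1) \in R$. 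For the inductive step, given $(v_i, u_i) \in R$ and the edge $(v_i, v_{i+1}) \in \E_1$, clause (c) yields vertices $u_1', \ldots, u_k'$ in $G_2$ (all related to $v_{i+1}$ by $R$, under the evident reading of the clause) whose edge intervals from $u_i$ cover $\edgelab_1((v_i, v_{i+1}))$. Since $t_i \in \edgelab_1((v_i, v_{i+1}))$, some $u_{j}'$ has $t_i \in \edgelab_2((u_i, u_j'))$, so we set $u_{i+1} := u_j'$.

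Next I would observe that clause (b) gives $\vertlab_2(u_j) = \modemap(\vertlab_1(v_j))$ for each $j$, so by the definition of $\modemap$ applied to a trace,
\[
\vertlab_2(u_1), t_1, \vertlab_2(u_2), t_2, \ldots, \vertlab_2(u_n) \;=\; \modemap(\vertlab_1(\pi)) \;=\; \modemap(\sigma).
\]
The constructed sequence is a path fragment in $G_2$ starting at an initial vertex, but $u_n$ need not lie in $\V_{2,\sf term}$. Since $G_2$ is a finite DAG and every edge interval $\edgelab_2(\cdot)$ is nonempty, I can extend the fragment to a maximal path $\pi'$ by greedily following any outgoing edges from $u_n$ (picking an arbitrary time in the corresponding interval at each step) until reaching $\V_{2,\sf term}$; termination is guaranteed by acyclicity and finiteness. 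Then $\pi' \in \paths{G_2}$, its trace $\vertlab_2(\pi') \in \traces{G_2}$, and by construction $\modemap(\sigma)$ is a prefix of $\vertlab_2(\pi')$, establishing $G_1 \preceq_{\modemap} G_2$.

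The only subtle point, and the one I would flag, is the reading of clause (c)(i): as literally written it says $(v, u_j) \in R$, but to obtain a useful induction one must read it as $(v', u_j) \in R$ (otherwise the inductive hypothesis cannot be carried forward past an edge), and this is the interpretation I use above. With that reading, every step is routine. A minor side remark: because each $\edgelab$ interval is closed, bounded, and nonempty, the time choices at the extension stage are always well defined, so no additional hypothesis beyond those in the definition is needed.
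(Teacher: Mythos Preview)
The paper states Proposition~\ref{prop:graphsim} without proof, so there is nothing to compare against; your argument is the natural one and is correct. Your observation about clause~(c)(i) is well taken: as printed it reads $(v,u_j)\in R$, but the simulation is only useful if the $u_j$ are related to the \emph{target} $v'$, and your proof correctly proceeds under that reading. The extension step to a maximal path in $G_2$ using finiteness, acyclicity, and nonemptiness of edge intervals is exactly what is needed to match the paper's definition of $\preceq_{\modemap}$ via maximal paths.
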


\vspace{-10pt}
\subsubsection{Sequential composition of graphs}

We will find it convenient to define the \emph{sequential composition}
of two transition graphs. Intuitively, the traces of the composition
of $G_1$ and $G_2$ will be those that can be obtained by concatenating
a trace of $G_1$ with a trace of $G_2$. To keep the definitions and
notations simple, we will assume (when taking sequential compositions)
$|\V_{\sf init}| = |\V_{\sf term}| = 1$; this is true of the examples
we analyze. It is easy to generalize to the case when this does not
hold. Under this assumption, the unique vertex in $\V_{\sf init}$ will
be denoted as $v_{\sf init}$ and the unique vertex in $\V_{\sf term}$
will be denoted as $v_{\sf term}$.

\begin{definition}
\label{def:seq-comp}
Given graphs $G_1 = \langle \L, \V_1, \E_1, \vertlab_1, \edgelab_1
\rangle$ and $G_2 = \langle \L, \V_2, \E_2,$ $\vertlab_2, \edgelab_2
\rangle$ such that $\vertlab_1(v_{1 {\sf term}}) = \vertlab_2(v_{2
  {\sf init}})$, the \emph{sequential composition} of $G_1$ and $G_2$
is the graph $G_1\seqcomp G_2 = \langle \L, \V, \E, \vertlab, \edgelab
\rangle$ where
\begin{enumerate}[(a)]
\itemsep0em 
\item $\V = (\V_1 \cup \V_2) \setminus \{v_{2 {\sf init}})\}$,
\item $\E = \E_1 \cup \{(v_{1 {\sf term}},u)\: |\: (v_{2 {\sf
    init}},u) \in \E_2\} \cup \{(v,u) \in \E_2\: |\: v \neq v_{2 {\sf
    init}}\}$,
\item $\vertlab(v) = \vertlab_1(v)$ if $v \in \V_1$ and $\vertlab(v) =
  \vertlab_2(v)$ if $v \in \V_2$,
\item For edge $(v,u) \in \E$, 
$\edgelab((v,u)) $ equals \begin{inparaenum}[(i)]
                   \item $\edgelab_1((v,u))$, if $u \in \V_1$, \\
                   \item $\edgelab_2((v_{2 {\sf init}},u))$,  if $v = v_{1 {\sf term}}$,
                   \item $\edgelab_2((v,u)), otherwise$.
                          \end{inparaenum}
\end{enumerate}
\end{definition}

Given our definition of trace containment between graphs, we can prove
a very simple property about sequential composition.

\begin{proposition}
\label{prop:seqcomp-containment}
Let $G_1$ and $G_2$ be two graphs with modes $\L$ that can be
sequential composed. Then $G_1 \preceq_{{\sf id}} G_1\seqcomp G_2$,
where ${\sf id}$ is the identity map on $\L$.
\end{proposition}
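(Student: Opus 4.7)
The plan is to exhibit a forward simulation relation from $G_1$ to $G_1\seqcomp G_2$ with identity mode map, and then appeal to Proposition~\ref{prop:graphsim}. The natural candidate is the identity on the $G_1$-vertices, $R = \{(v,v) : v \in \V_1\}$; this is well-defined into $\V = (\V_1 \cup \V_2)\setminus\{v_{2{\sf init}}\}$ since the only vertex removed by the composition, $v_{2{\sf init}}$, is not a vertex of $G_1$.

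I would then check the three forward-simulation clauses in turn. For clause (a), I would verify that $v_{1{\sf init}}$ remains an initial vertex of $G_1\seqcomp G_2$: inspecting the three components of $\E$ in Definition~\ref{def:seq-comp}, no composite edge terminates at $v_{1{\sf init}}$ (the newly added edges leave $v_{1{\sf term}}$ or stay inside $\V_2$), so the pair $(v_{1{\sf init}},v_{1{\sf init}}) \in R$ discharges (a). Clause (b) is immediate since $\vertlab(v) = \vertlab_1(v)$ for every $v \in \V_1$ by the composite labeling definition, and the mode map is the identity. For clause (c), given $(v,v') \in \E_1$ and $(v,v) \in R$, I would take the singleton successor $u_1 = v'$: because $v$ has an outgoing $\E_1$-edge it cannot equal $v_{1{\sf term}}$, so $(v,v')$ lies in the $\E_1$-summand of $\E$, and case (i) of the composite edge-labeling (applicable since $v' \in \V_1$) gives $\edgelab((v,v')) = \edgelab_1((v,v'))$. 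The interval inclusion $\edgelab_1((v,v')) \subseteq \edgelab((v,v'))$ is then trivial, and $(v',v') \in R$.

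The main step requiring care is simply tracking which of the three cases in the definition of $\E$ and of $\edgelab$ is triggered in each argument; once this bookkeeping is done, all three simulation clauses reduce to direct consequences of $G_1$'s data being inherited verbatim into the composition, apart from the behavior at $v_{1{\sf term}}$, which never interferes because it has no outgoing $\E_1$-edges. Proposition~\ref{prop:graphsim} then delivers the desired $G_1 \preceq_{\sf id} G_1 \seqcomp G_2$.
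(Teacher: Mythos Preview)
Your argument is correct, but it takes a different route from the paper. The paper's justification is a one-line direct argument at the level of traces: every maximal path of $G_1$ (which by definition ends at $v_{1{\sf term}}$) extends to a maximal path of $G_1\seqcomp G_2$ by continuing into $G_2$, so every trace of $G_1$ is literally a prefix of some trace of $G_1\seqcomp G_2$, and $G_1 \preceq_{\sf id} G_1\seqcomp G_2$ holds by definition. You instead construct the identity relation on $\V_1$ as a forward simulation and invoke Proposition~\ref{prop:graphsim}. Your bookkeeping is sound (the only mildly delicate point, that $v_{1{\sf init}}$ remains initial in the composite, you handle correctly), and the approach has the virtue of exercising the simulation machinery the paper introduces. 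The paper's direct path-extension argument is shorter and avoids the detour through Proposition~\ref{prop:graphsim}; your approach, on the other hand, would generalize more uniformly if one later wanted to compare $G_1$ against compositions built from graphs that are not literal supergraphs of $G_1$.
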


The proposition follows from the fact that every path of $G_1$ is a
prefix of a path of $G_1\seqcomp G_2$. Later in Section \ref{sec: sub_exp} we see examples of sequential composition.


%

\subsection{Trajectories}

The evolution of the system's continuous state variables is formally
described by continuous functions of time called {\em trajectories\/}.
Let $n$ be the number of continuous variables in the underlying hybrid
model. A {\em trajectory\/} for an $n$-dimensional system is a
continuous function of the form $\tau: [0,T] \rightarrow \reals^n$,
where $T \geq 0$. The interval $[0,T]$ is called the {\em domain\/} of
$\tau$ and is denoted by $\tau.\dom$. The first state $\tau(0)$ is
denoted by $\tau.\fstate$, last state $\tau.\lstate = \tau(T)$ and
$\tau.\ltime = T$.
For a hybrid system with $\L$ modes, each trajectory is labeled by a
mode in $\L$.  A {\em trajectory labeled by $\L$\/} is a pair $\langle
\tau, \ell \rangle$ where $\tau$ is a trajectory and $\ell \in \L$.
 
A {\em $T_1$-prefix\/} of $\langle \tau, \ell\rangle$, for any $T_1
\in \tau.\dom$, is the labeled-trajectory $\langle \tau_1,
\ell\rangle$ with $\tau_1:[0,T_1] \rightarrow \reals^n$, such that for
all $t \in [0, T_1]$, $\tau_1(t) = \tau(t)$.
A set of labeled-trajectories $\TL$ is prefix-closed if for any
$\langle \tau,\ell \rangle \in \TL$, any of its prefixes are also in
$\TL$.  A set $\TL$ is {\em deterministic\/} if for any pair $\langle
\tau_1, \ell_1 \rangle, \langle \tau_2,\ell_2 \rangle \in \TL$, if
$\tau_1.\fstate = \tau_2.\fstate$ and $\ell_1 = \ell_2$ then one is a
prefix of the other.
A deterministic, prefix-closed set of labeled trajectories $\TL$
describes the behavior of the continuous variables in modes $\L$.
We denote by $\TL_{\sf init,\ell}$ $=\{ \tau.\fstate \ | \ \langle
\tau,\ell \rangle \in \TL \}$, the set of initial states of
trajectories in mode $\ell$. Without loss generality we assume that
$\TL_{{\sf init},\ell}$ is a connected, compact subset of $\reals^n$.
We assume that trajectories are defined for unbounded time, that is,
for each $\ell \in \L, T >0$, and $x \in \TL_{{\sf init},\ell}$, there
exists a $\langle \tau, \ell \rangle \in \TL$, with $\tau.\fstate = x$
and $\tau.\ltime = T$.  

%
In control theory and hybrid systems literature, the trajectories are
assumed to be generated from models like ordinary differential
equations (ODEs) and differential algebraic equations (DAEs). Here,
we avoid an over-reliance on the models generating trajectories and
closed-form expressions. Instead, {\toolname} works with sampled data
of $\tau(\cdot)$ generated from simulations or tests.
	
\begin{definition}
	\label{def:sims}
A {\em simulator\/} for a (deterministic and prefix-closed) set $\TL$
of trajectories labeled by $\L$ is a function (or a program)
$\simulator$ that takes as input a mode label $\ell \in \L$, an
initial state $x_0 \in \TL_{\sf init,\ell}$, and a finite
sequence of time points $t_1, \ldots, t_k$, and returns a sequence of
states $\simulator(x_0,\ell,t_1), \ldots, \simulator(x_0,\ell, t_k)$
such that there exists $\langle\tau,\ell\rangle \in \TL$ with
$\tau.\fstate = x_0$ and for each $i\in \{1,\ldots, k\}$,
$\simulator(x_0,\ell,t_i) = \tau(t_i)$.
\end{definition}

The trajectories of the $\auto{Powertrn}$ system are described by a \Simulink\ diagram.
The diagram has several switch blocks and input signals  that can be set appropriately to generate simulation data using the \Simulink\ ODE solver.

For simplicity, we assume that the simulations are perfect (as in the
last equality of Definition~\ref{def:sims}). Formal guarantees of
soundness of {\toolname} are not compromised if we use \emph{validated
  simulations} instead.
	



\vspace{-10pt}
\paragraph{Trajectory containment}
Consider sets of trajectories, $\TL_1$ labeled by $\L_1$ and
$\TL_2$ labeled by $\L_2$, and a mode map $\modemap: \L_1 \rightarrow
\L_2$.
For a labeled trajectory $\langle \tau,\ell \rangle \in \TL_1$, denote by $\modemap(\langle \tau,\ell \rangle)$ the labeled-trajectory
$\langle \tau,\modemap(\ell) \rangle$.  Write $\TL_1
\preceq_{\modemap} \TL_2$ iff for every labeled trajectory $\langle
\tau,\ell \rangle \in \TL_1$, $\modemap(\langle \tau,\ell \rangle) \in
\TL_2$.

\subsection{Hybrid systems}
\begin{definition}	
An $n$-dimensional  {\em hybrid system\/} $\H$ is a 4-tuple $\langle \L, \Theta, G, \TL \rangle$, where
\begin{inparaenum}[(a)]
\item $\L$ is a finite set of modes, 
\item $\Theta \subseteq \reals^n$ is a compact set of initial states,
\item $G = \langle \L, \V, \E, \edgelab \rangle$ is a transition graph
  with set of modes $\L$, and
\item $\TL$ is a set of deterministic, prefix-closed trajectories
  labeled by $\L$.
\end{inparaenum}
\end{definition}

A {\em state\/} of the hybrid system $\H$ is a point in $\reals^n \times \L$. 
The set of initial states is $\Theta \times \L_{\sf init}$.  Semantics
of $\H$ is given in terms of executions which are sequences of
trajectories consistent with the modes defined by the transition
graph.  An {\em execution\/} of $\H$ is a sequence of labeled
trajectories $\alpha = \langle \tau_1, \ell_1\rangle\ldots, \langle
\tau_{k-1}, \ell_{k-1}\rangle, \ell_k$ in $\TL$, such that
\begin{inparaenum}[(a)]
	\item $\tau_1.\fstate \in \Theta$ and  $\ell_1 \in \L_{\sf init}$,
	\item the sequence $\pathof{\alpha}$ defined as $\ell_1, \tau_1.\ltime, \ell_2, \ldots \ell_k$ is in $\traces{G}$, and
	\item for each consecutive trajectory, $\tau_{i+1}.\fstate =
          \tau_i.\lstate$.
\end{inparaenum}
The set of all executions of $\H$ is denoted by $\execs{\H}$.  The
first and last states of an execution $\alpha = \langle \tau_1,
\ell_1\rangle\ldots, \langle \tau_{k-1}, \ell_{k-1}\rangle, \ell_k$
are $\alpha.\fstate = \tau_1.\fstate$, $\alpha.\lstate =
\tau_{k-1}.\lstate$, and $\alpha.\fmode = \ell_1$ $\alpha.\lmode =
\ell_k$.  A state $\langle x, \ell \rangle$ is {\em reachable\/} at
time $t$ and vertex $v$ (of graph $G$) if there exists an execution
$\alpha = \langle \tau_1, \ell_1\rangle\ldots, \langle \tau_{k-1},
\ell_{k-1}\rangle, \ell_k \in \execs{\H}$, a path $\pi = v_1, t_1,
\ldots v_k$ in $\paths{G}$, $i \in \{1,\ldots k\}$, and $t' \in
\tau_i.\dom$ such that $\vertlab(\pi) = \pathof{\alpha}$, $v = v_i$,
$\ell = \ell_i$, $x = \tau_i(t')$, and $t = t' + \sum_{j=1}^{i-1} t_j$.
The set of reachable states, reach tube, and states reachable at a
vertex $v$ are defined as follows.

\begin{itemize}[\null]
\itemsep0pt 
\item $\reachtube{\H} = \{\langle x,\ell,t \rangle\: |\: \mbox{for some }
v,\ \langle x, \ell \rangle \mbox{ is reachable at time $t$ and vertex
  $v$}\}$
\item $\reach{\H} = \{\langle x,\ell \rangle\: |\: \mbox{for some }
v,t,\ \langle x, \ell \rangle \mbox{ is reachable at time $t$ and
  vertex $v$}\}$
\item $\reach{\H}^v = \{\langle x,\ell \rangle\: |\: \mbox{for some }
t,\ \langle x, \ell \rangle \mbox{ is reachable at time $t$ and
  vertex $v$}\}$
\end{itemize}

Given
a set of (unsafe) states $\U \subseteq \reals^n \times \L$, the {\em
  bounded safety verification problem\/} is to decide whether
$\reach{\H} \cap \U = \emptyset$.  In Section~\ref{sec:reachalgo} we
will present \toolname's algorithm for solving this decision problem.

\begin{remark}
Defining paths in a graph $G$ to be maximal (i.e., end in a vertex in
$\V_{{\sf term}}$) coupled with the definition above for executions in
$\H$, ensures that for a vertex $v$ with outgoing edges in $G$, the
execution must leave the mode $\vertlab(v)$ within time bounded by the
largest time in the labels of outgoing edges from $v$.
\end{remark}

An instance of the bounded safety verification problem is defined by
(a) the hybrid system for the $\auto{Powertrn}$ which itself is
defined by the transition graph of Figure~\ref{fig:power_graph} and
the trajectories defined by the \Simulink\ model, and (b) the unsafe
set ($\U_p$): in \Lmode{powerup} mode, $t>4\wedge \lambda \notin
[12.4,12.6]$, in \Lmode{normal} mode, $t>4 \wedge \lambda \notin
[14.6,14.8]$.

Containment between graphs and trajectories can be leveraged to
conclude the containment of the set of reachable states of two hybrid
systems.
\begin{proposition}
\label{prop:hybrid-contain}
Consider a pair of hybrid systems $\H_i = \langle \L_i, \Theta_i, G_i,
\TL_i \rangle$, $i \in \{1,2\}$ and mode map $\modemap: \L_1 \to
\L_2$. If $\Theta_1 \subseteq \Theta_2$, $G_1 \preceq_{\modemap} G_2$,
and $\TL_1 \preceq_{\modemap} \TL_2$, then $\reach{\H_1} \subseteq
\reach{\H_2}$.
\end{proposition}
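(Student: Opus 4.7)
The plan is to show $\reach{\H_1} \subseteq \reach{\H_2}$ (with states in $\H_1$ viewed in $\H_2$ through $\modemap$) by lifting a witnessing execution of $\H_1$ into one of $\H_2$. First I would take an arbitrary $\langle x, \ell\rangle \in \reach{\H_1}$ and unpack the definition to obtain an execution $\alpha = \langle \tau_1, \ell_1\rangle, \ldots, \langle \tau_{k-1}, \ell_{k-1}\rangle, \ell_k \in \execs{\H_1}$, a path $\pi \in \paths{G_1}$, an index $i$, and a time $t' \in \tau_i.\dom$ with $\vertlab_1(\pi) = \pathof{\alpha}$, $\ell = \ell_i$, and $x = \tau_i(t')$.

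Next I would form the lifted sequence $\alpha' = \langle \tau_1, \modemap(\ell_1)\rangle, \ldots, \langle \tau_{k-1}, \modemap(\ell_{k-1})\rangle, \modemap(\ell_k)$. The hypothesis $\TL_1 \preceq_{\modemap} \TL_2$ delivers $\langle \tau_j, \modemap(\ell_j)\rangle \in \TL_2$ for every $j$, while the chaining conditions $\tau_{j+1}.\fstate = \tau_j.\lstate$ are inherited since the underlying trajectories are unchanged, and $\tau_1.\fstate \in \Theta_1 \subseteq \Theta_2$ by hypothesis. Applying $G_1 \preceq_{\modemap} G_2$ to $\pathof{\alpha} \in \traces{G_1}$, the sequence $\modemap(\pathof{\alpha}) = \pathof{\alpha'}$ is a prefix of some $\sigma' \in \traces{G_2}$, realized by a maximal path $\pi' \in \paths{G_2}$; in particular $\modemap(\ell_1)$ is an initial mode of $G_2$, so the initial-mode requirement for $\execs{\H_2}$ is satisfied.

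The main obstacle is that $\execs{\H_2}$ requires a full (maximal) trace of $G_2$, while $\alpha'$ only realizes a prefix of $\pi'$. To close this gap I would append labeled trajectories in $\TL_2$ matching the modes and time-interval labels along the tail of $\pi'$, invoking the standing assumption that $\TL_2$ contains trajectories of arbitrary positive duration starting from every point of the respective $\TL_{2,\sf init,\cdot}$ sets; chaining these yields a full execution $\alpha'' \in \execs{\H_2}$ whose associated path in $G_2$ is $\pi'$. Since $\alpha''$ agrees with $\alpha'$ on its first $k$ entries and $\pi'$ extends the $\modemap$-image of $\pi$, the state $\langle x, \modemap(\ell)\rangle = \langle \tau_i(t'), \modemap(\ell_i)\rangle$ is exhibited as reached by $\alpha''$ at time $t' + \sum_{j<i} \tau_j.\ltime$ and at the $i$-th vertex of $\pi'$, establishing $\langle x,\modemap(\ell)\rangle \in \reach{\H_2}$ and hence the proposition.
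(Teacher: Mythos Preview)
The paper states Proposition~\ref{prop:hybrid-contain} without proof, so there is no argument in the paper to compare against. Your approach---unpack a witnessing execution of $\H_1$, relabel its trajectories via $\modemap$, use $G_1 \preceq_{\modemap} G_2$ to embed its trace as a prefix of a maximal trace of $G_2$, and then extend to a full execution of $\H_2$---is the natural one and is what any proof of this proposition would have to do. You also correctly identify the one nontrivial step: executions are tied to \emph{maximal} paths, so the relabeled execution must be extended along the tail of $\pi'$.

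There is one gap worth naming in your extension step. To append a trajectory in mode $m$ starting from a state $x$, you need $x \in \TL_{2,\sf init,\mathit{m}}$, and the paper's unbounded-time assumption only supplies trajectories from points already in that set. Nothing in the hypotheses $\Theta_1 \subseteq \Theta_2$, $G_1 \preceq_{\modemap} G_2$, $\TL_1 \preceq_{\modemap} \TL_2$ guarantees that the chaining state $\tau_{k-1}.\lstate$ (or the subsequent last states you generate) lies in $\TL_{2,\sf init,\cdot}$ for the next mode along $\pi'$. In the paper's intended setting---trajectories generated by ODEs or black-box simulators defined on all of $\reals^n$---each $\TL_{\sf init,\ell}$ is effectively all of $\reals^n$ and the issue evaporates, but as written the framework does not force this. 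You should either state this as an additional standing assumption or note that the proposition is being read under it.
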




\subsection{ADAS and autonomous vehicle benchmarks}
\label{ssec:adas}
This is a suite of benchmarks we have created representing various common scenarios used for testing ADAS and Autonomous driving control systems. 
The hybrid system for a scenario is  constructed by putting together several individual vehicles.
The higher-level decisions (paths) followed by the vehicles are captured by transition graphs
while the detailed dynamics of each vehicle comes from a black-box 
\Simulink \ simulator from \Mathworks~\cite{simulinkcar}.

Each vehicle has several continuous variables including the $x,
y$-coordinates of the vehicle on the road, its velocity, heading, and
steering angle. The vehicle can be controlled by two input signals,
namely the throttle (acceleration or brake) and the steering speed.
By choosing appropriate values for these input signals, we have
defined the following modes for each vehicle ---
\Lmode{cruise}: move forward at constant speed, 
\Lmode{speedup}: constant acceleration,
\Lmode{brake}: constant (slow) deceleration,
\Lmode{em\_brake}: constant (hard) deceleration.
In addition, we have designed lane switching modes \Lmode{ch\_left} and \Lmode{ch\_right} in which the acceleration and steering are controlled in such a manner that the vehicle switches to its left (resp. right) lane in a certain amount of time. 

For each vehicle, we mainly analyze four variables: absolute position
($sx$) and velocity ($vx$) orthogonal to the road direction
($x$-axis), and absolute position ($sy$) and velocity ($vy$) along the
road direction ($y$-axis). The throttle and steering are captured using
the four variables. We will use subscripts to distinguish between
different vehicles.  The following scenarios are constructed by
defining appropriate sets of initial states and transitions graphs
labeled by the modes of two or more vehicles. In all of these
scenarios a primary safety requirement is that the vehicles maintain
safe separation. See Appendix~\ref{app:adas} for more details on
initial states and transition graphs of each scenario.
\begin{description} 
\itemsep0em 
\item[$\auto{Merge}$:] 
Vehicle A in the left lane is behind   vehicle B  in the right lane. A switches through modes  \Lmode{cruise}, \Lmode{speedup}, \Lmode{ch\_right}, and \Lmode{cruise} over  specified intervals to merge behind B.
Variants of this scenario involve $B$ also switching to \Lmode{speedup} or \Lmode{brake}.
%
\item[$\auto{AutoPassing}$:]
Vehicle A starts behind  B in the same lane, and  goes through a sequence of modes to  overtake B. If B switches to \Lmode{speedup} before A enters \Lmode{speedup} then A aborts and changes back to right lane. 
\item[$\auto{Merge3}$:]
Same as $\auto{AutoPassing}$ with a third car C always ahead of $B$.
\item[$\auto{AEB}$:]
Vehicle A cruises behind B and B stops.
A transits from \Lmode{cruise} to \Lmode{em\_brake} possibly over several different time intervals as governed by different sensors and reaction times.

\end{description}





\section{Invariant verification}
\label{sec:reachalgo}

A subproblem for invariant verification is to
compute $\reachtube{\H}$, or more specifically, the reachtubes for the set of trajectories $\TL$ in a given mode, up to a time bound.
This is a difficult problem, even when $\TL$ is generated by white-box models.
The algorithms in~\cite{donze2010breach,DMV:EMSOFT2013,FanMitra:2015df} approximate reachtubes using simulations and sensitivity analysis of ODE models generating $\TL$.
Here, we begin with a probabilistic method for estimating sensitivity from black-box simulators.

\subsection{Discrepancy functions}
\label{sec:disc}

Sensitivity of trajectories is formalized by the notion of discrepancy
functions \cite{DMV:EMSOFT2013}.  For a set 
$\TL$, a {\em discrepancy function\/} is a uniformly continuous
function $\beta: \reals^n \times \reals^n \times \nnreals \rightarrow
\nnreals$, such that for any pair of identically labeled trajectories
$\langle \tau_1,\ell \rangle, \langle \tau_2, \ell \rangle \in \TL$,
and any $t \in \tau_1.\dom \cap \tau_2.\dom$:
\begin{inparaenum}[(a)]
	\item $\beta$ upper-bounds the distance between the trajectories, i.e., 
	\begin{align}
	|\tau_1(t) - \tau_2(t)|  \leq \beta(\tau_1.\fstate,\tau_2.\fstate,t), \label{eq:discrepancy}
	\end{align} and
	\item $\beta$ converges to $0$ as the initial states converge, i.e., for any trajectory $\tau$ and $t \in
          \tau.\dom$, if a sequence of trajectories $\tau_1,\ldots,
          \tau_k, \ldots$ has $\tau_k.\fstate \rightarrow
          \tau.\fstate$, then $\beta(\tau_k.\fstate,$
          $\tau.\fstate,t)$ $\rightarrow 0$.
\end{inparaenum}
In~\cite{DMV:EMSOFT2013} it is shown how given a $\beta$, condition~(a) can used to over-approximate reachtubes  from simulations, 
and condition~(b) can be used to make these approximations arbitrarily precise. 
Techniques for computing $\beta$ from ODE models are developed in~\cite{FanMitra:2015df,FanM:EMSOFT2016,HFMMK:CAV2014}, but these are not applicable here in absence of such models. Instead we present a simple method for discovering discrepancy functions that only uses simulations. Our method is based on classical results on PAC learning linear separators~\cite{KearnsVazirani}. We recall these before applying them to find discrepancy functions.
\vspace{-10pt}
\subsubsection{Learning linear separators.}
\label{Sec: discrepancy}
For $\Gamma \subseteq \reals\times\reals$, a \emph{linear separator}
is a pair $(a,b) \in \reals^2$ such that
\begin{align}
\forall (x,y) \in \Gamma.\ x \leq ay + b.
\label{eq:separator}
\end{align}
Let us fix a subset $\Gamma$ that has a (unknown) linear separator
$(a_*,b_*)$. Our goal is to discover some $(a,b)$ that is a linear
seprator for $\Gamma$ by sampling points in $\Gamma$~\footnote{We
  prefer to present the learning question in this form as opposed to
  one where we learn a Boolean concept because it is closer to the
  task at hand.}. The assumption is that elements of $\Gamma$ can be
drawn according to some (unknown) distribution ${\cal D}$. With
respect to ${\cal D}$, the \emph{error} of a pair $(a,b)$ from
satisfying Equation~\ref{eq:separator}, is defined to be
$\mathsf{err}_{{\cal D}}(a,b) = {\cal D}(\{(x,y) \in \Gamma\: |\: x >
ay+b\})$
where ${\cal D}(X)$ is the measure of set $X$ under distribution
${\cal D}$. Thus, the error is the measure of points (w.r.t. ${\cal
  D}$) that $(a,b)$ is not a linear separator for. There is a very
simple (probabilistic) algorithm that finds a pair $(a,b)$ that is a
linear separator for a large fraction of points in $\Gamma$, as
follows.
\begin{enumerate}
\itemsep0em 
\item\label{alg:linsep1} Draw $k$ pairs $(x_1,y_1), \ldots (x_k,y_k)$
  from $\Gamma$ according to ${\cal D}$; the value of $k$ will be
  fixed later.
\item\label{alg:linsep2} Find $(a,b) \in \reals^2$ such that $x_i
  \leq ay_i + b$ for all $i \in \{1,\ldots k\}$.
\end{enumerate}
Step~\ref{alg:linsep2} involves checking feasibility of a linear
program, and so can be done efficiently. This algorithm, with high
probability, finds a linear separator for a large fraction of points.
\begin{proposition}
\proplabel{linear-sep-learn} Let $\epsilon, \delta \in \plreals$. If
$k \geq \frac{1}{\epsilon}\ln\frac{1}{\delta}$ then, with probability
$\geq 1-\delta$, the above algorithm finds $(a,b)$ such that
$\mathsf{err}_{{\cal D}}(a,b) < \epsilon$.
\end{proposition}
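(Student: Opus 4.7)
The plan is to use the classical realizable PAC learning argument, bounding the failure probability by an exponentially decaying function of the sample size $k$. First I would note that the hidden target $(a_*,b_*)$ whose existence is assumed always satisfies the constraints in step 2, so the linear program is feasible and the algorithm always returns some consistent pair $(a,b)$. I would then define the ``bad event'' $E$ as the event that the returned pair has $\mathsf{err}_{\mathcal{D}}(a,b) \geq \epsilon$, and aim to show $\Pr(E) \leq \delta$.

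The core of the argument is geometric. Every $(a,b)$ produced by the algorithm satisfies $x_i \leq a y_i + b$ for each sample $(x_i,y_i)$, so none of the $k$ drawn points lies in the error region $B(a,b) = \{(x,y) \in \Gamma : x > ay + b\}$. Suppose the output is bad, i.e.\ $\mathsf{err}_{\mathcal{D}}(a,b) = \mathcal{D}(B(a,b)) \geq \epsilon$. Then each of the $k$ independent draws misses $B(a,b)$ with probability at most $1-\epsilon$, so jointly all $k$ miss it with probability at most $(1-\epsilon)^k \leq e^{-\epsilon k}$. The hypothesized bound $k \geq \frac{1}{\epsilon}\ln\frac{1}{\delta}$ drives this quantity to at most $\delta$.

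The main obstacle is that this per-hypothesis tail bound does not immediately promote to a uniform guarantee, because the set of possible outputs is an uncountable family of half-planes parametrized by $(a,b) \in \reals^2$. The standard remedy is to exploit the fact that the concept class of half-planes in $\reals^2$ has finite VC dimension (at most $3$): by Sauer--Shelah the number of combinatorially distinct labelings of any $k$ samples is only $O(k^3)$, so a union bound over these labelings costs only a polynomial-in-$k$ factor, which the exponential decay $e^{-\epsilon k}$ easily absorbs. Equivalently, one can note that a ``tightest'' consistent separator is pinned down by at most a couple of boundary samples, reducing the uncountable family to a combinatorial union bound of the same order. In either formulation the sample complexity stated in the proposition suffices to make the probability of the bad event at most $\delta$, which is the desired conclusion.
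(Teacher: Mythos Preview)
Your first two paragraphs are exactly the paper's proof: take the returned pair $(a,b)$, observe that if its error exceeds $\epsilon$ then the set $B=\{(x,y):x>ay+b\}$ has $\mathcal{D}$-measure greater than $\epsilon$ yet was missed by all $k$ independent draws, and bound that event by $(1-\epsilon)^k\le e^{-\epsilon k}\le\delta$. The paper stops right there; it prefaces the calculation by remarking that the result follows from PAC-learnability of low VC-dimension classes (citing Kearns--Vazirani), and then presents only this direct bound, without explicitly confronting the dependence of $(a,b)$ on the sample.

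Your third paragraph therefore goes \emph{beyond} the paper's argument rather than taking a different route. The uniformity concern you raise is real, and your Sauer--Shelah / growth-function fix is the standard way to make the informal step rigorous. One quantitative caution, though: once you union-bound over the $O(k^3)$ dichotomies induced on the sample, the failure probability becomes $O(k^3)e^{-\epsilon k}$ rather than $e^{-\epsilon k}$, and pushing that below $\delta$ needs $k\gtrsim\frac{1}{\epsilon}\bigl(\ln\frac{1}{\delta}+c\ln\frac{1}{\epsilon}\bigr)$ for some constant $c$. So the clean bound $k\ge\frac{1}{\epsilon}\ln\frac{1}{\delta}$ in the proposition does not drop straight out of the VC argument as you assert in your last sentence; you would either accept the extra logarithmic term or invoke a sharper realizable-case bound.
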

%
\begin{proof}
The result follows from the PAC-learnability of concepts with low
VC-dimension~\cite{KearnsVazirani}. However, since the proof is very
simple in this case, we reproduce it here for completeness. Let $k$ be
as in the statement of the proposition, and suppose the pair $(a,b)$
identified by the algorithm has error $> \epsilon$. We will bound the
probability of this happening.

Let $B = \{(x,y)\: |\: x > ay+b\}$. We know that ${\cal D}(B) >
\epsilon$. The algorithm chose $(a,b)$ only because no element from
$B$ was sampled in Step~\ref{alg:linsep1}. The probability that this
happens is $\leq (1-\epsilon)^k$. Observing that $(1-s) \leq e^{-s}$
for any $s$, we get $(1-\epsilon)^k \leq e^{-\epsilon k} \leq e^{-\ln
  \frac{1}{\delta}} = \delta$.  This gives us the desired result.
\end{proof}

\subsubsection{Learning discrepancy functions}
Discrepancy functions will be computed from simulation data
independently for each mode. Let us fix a mode $\ell \in \L$, and a
domain $[0,T]$ for each trajectory. The discrepancy functions that we
will learn from simulation data, will be one of two different forms,
and we discuss how these are obtained.

\vspace{-10pt}
\paragraph{Global exponential discrepancy (GED)} is a function of the form
\[
\beta(x_1,x_2,t) = |x_1 - x_2| Ke^{\gamma t}.
\]
Here $K$ and $\gamma$ are constants. Thus, for any pair of
trajectories $\tau_1$ and $\tau_2$ (for mode $\ell$), we have
\[
\forall t \in [0,T].\ |\tau_1(t) - \tau_2(t)| \leq |\tau_1.\fstate -
\tau_2.\fstate| Ke^{\gamma t}.
\]
Taking logs on both sides and rearranging terms, we have
\[
\forall t.\ \ln \frac{|\tau_1(t) - \tau_2(t)|}{|\tau_1.\fstate -
\tau_2.\fstate|} \leq \gamma t + \ln K.
\]
It is easy to see that a global exponential discrepancy is nothing but
a linear separator for the set $\Gamma$ consisting of pairs $(\ln
\frac{|\tau_1(t) = \tau_2(t)|}{|\tau_1.\fstate - \tau_2.\fstate|}, t)$
for all pairs of trajectories $\tau_1,\tau_2$ and time $t$. Using the
sampling based algorithm described before, we could construct a GED for a mode $\ell \in \L$, where sampling from
$\Gamma$ reduces to using the simulator to generate traces from
different states in $\TL_{\sf init, \ell}$. \propref{linear-sep-learn}
guarantees the correctness, with high probability, for any separator
discovered by the algorithm.  However, for our reachability algorithm
to not be too conservative, we need $K$ and $\gamma$ to be
small. Thus, when solving the linear program in Step~\ref{alg:linsep2}
of the algorithm, we search for a solution minimizing  $\gamma T +
\ln K$.

\vspace{-10pt}
\paragraph{Piece-wise exponential discrepancy (PED).} 
The second form of discrepancy functions we consider, depends upon
dividing up the time domain $[0,T]$ into smaller intervals, and
finding a global exponential discrepancy for each interval. Let $0 =
t_0,t_1,\ldots t_N = T$ be an increasing sequence of time points. Let
$K, \gamma_1, \gamma_2, \ldots \gamma_N$ be such that for every pair
of trajectories $\tau_1,\tau_2$ (of mode $\ell$), for every $i \in
\{1,\ldots, N\}$, and $t \in [t_{i-1},t_i]$, $|\tau_1(t) = \tau_2(t)|
\leq |\tau_1(t_{i-1}) - \tau_2(t_{i-1})| Ke^{\gamma_i t}$.  Under such
circumstances, the discrepancy function itself can be seen to be given
as
\[
\beta(x_1,x_2,t) = |x_1 - x_2| Ke^{\sum_{j=1}^{i-1}\gamma_j(t_j -
  t_{j-1}) + \gamma_i (t-t_{i-1})} \qquad \mbox{for } t \in [t_{i-1},t_i].
\]
If the time points $0 = t_0,t_1,\ldots t_N = T$ are fixed, then the
constants $K, \gamma_1, \gamma_2, \ldots $ $\gamma_N$ can be discovered
using the learning approach described for GED; here, to discover $\gamma_i$, we take $\Gamma_i$ to be the pairs obtained by restricting the trajectories to be between times
$t_{i-1}$ and $t_i$. The sequence of time points $t_i$ are also
dynamically constructed by our algorithm based on the following
approach. Our experience suggests that a value for $\gamma$ that is
$\geq 2$ results in very conservative reach tube
computation. Therefore, the time points $t_i$ are constructed
inductively to be as large as possible, while ensuring that $\gamma_i <
2$.

\vspace{-10pt}
\subsubsection{Experiments on learning discrepancy}

We used the above algorithm to learn discrepancy functions for dozens
of modes with complex, nonlinear trajectories.  Our experiments
suggest that around 10-20 simulation traces are adequate for computing
both global and piece-wise discrepancy functions. For each mode we use
a set $S_{\sf train}$ of simulation traces that start from
independently drawn random initial states in $\TL_{\sf init,\ell}$ to
learn a discrepancy function.  Each trace may have $100$-$10000$ time
points, depending on the relevant time horizon and sample times.  Then
we draw another set $S_{\sf test}$ of $1000$ simulations traces for
validating the computed discrepancy. For every pair of trace in
$S_{\sf test}$ and for every time point, we check whether the computed
discrepancy satisfies Equation~\ref{eq:discrepancy}.  We 
observe that for $|S_{\sf train}| > 10$ the computed discrepancy
function is correct for $96\%$ of the points $S_{\sf test}$ in and for
$|S_{\sf train}| > 20$ it is correct for more than $99.9\%$, across all experiments.

\subsection{Verification algorithm}
\label{sec:verfication_algorithm}

In this section, we present algorithms to solve the bounded
verification problem for hybrid systems using learned exponential discrepancy functions.
We first introduce an algorithm $\GraphReach$ (Algorithm
\ref{alg:ComputeRT}) which takes as input a hybrid system $\H =
\langle \L, \Theta, G, \TL \rangle$ and returns a set of
reachtubes---one for each vertex of $G$---such that their
union over-approximates $\reachtube{\H}$.
 



$\GraphReach$ maintains two  data-structures:
 \begin{inparaenum}[(a)]
\item $RS$ accumulates pairs of the form $\langle RT, v\rangle$, where  $v \in \V$ and $RT$ is its corresponding reachtube;
\item $\VerInit$ accumulates pairs of the form $\langle S, v \rangle$,
  where $v \in \V$ and $S\subset \reals^n$ is the set of states  from which the reachtube in $v$ is to be computed.
\end{inparaenum} 
Each
$v$ could be in multiple such pairs in $RS$ and $\VerInit$.
Initially, $RS = \emptyset$  and
$\VerInit = \{\langle \Theta, v_{\sf init}\rangle\}$.

$\mathit{LearnDiscrepancy(S_{\sf init},d,\ell)}$ computes
the discrepancy function for mode $\ell$, from initial set $S_{\sf
  init}$ and upto time $d$ using the algorithm of Section \ref{sec:disc}. $\computeRT(S_{\sf init},d,\beta)$ first
generates finite simulation traces from  $S_{\sf init}$
and then bloats the traces to compute a reachtube using the discrepancy function $\beta$. This step is similar to the algorithm for dynamical systems  given in~\cite{DMV:EMSOFT2013}.

The $\GraphReach$ algorithm proceeds as follows: first, a topologically sorted array of the vertices of the DAG $G$ is computed in  $\mathit{Order}$ (\lnref{ln: init2}). 
The pointer $ptr$ iterates over the $\mathit{Order}$ and for each vertex $\mathit{curv}$ the following is computed. 
The variable $\mathit{dt}$  is set to the maximum transition time to other vertices from $\mathit{curv}$ (\lnref{ln: dwt}). 
For each possible initial set $S_{\sf init}$ corresponding to $\mathit{curv}$ in $\VerInit$, the algorithm computes a discrepancy function (\lnref{ln: disc}) and uses it to compute a reachtube from $S_{\sf init}$ up to time $\mathit{dt}$ (\lnref{ln: reachtube}). 
For each successor $\mathit{nextv}$ of $\mathit{curv}$, the restriction of the computed reachtube $RT$ to the corresponding transition time interval $\edgelab((\mathit{curv,nextv}))$ is set as  an initial set for $\mathit{nextv}$ (\lnsref{ln: forloopbegin}{ln: forloopend}).

\begin{algorithm}[h!]
\caption{$\GraphReach(\H)$ computes bounded time reachtubes for each vertex of the transition $G$ of hybrid system $\H$.}
\label{alg:ComputeRT}
\SetKwInOut{Input}{input}
\SetKwInOut{Initially}{initially}
{$RS \gets \emptyset; \VerInit \gets \{\langle \Theta, v_{\sf init} \rangle\}; \mathit{Order} \gets \mathit{TopSort}(G)$\;} \lnlabel{ln: init2}
\For {$ptr = 0: len(Order)-1$}
{	
	{$\mathit{curv} \gets \mathrm{Order}[ptr]$ \;} \lnlabel{ln: currentV}
	{$\ell \gets \vertlab(\mathit{curv})$\;} \lnlabel{ln: currentL}

	$\mathit{dt} \gets \textrm{max} \{t' \in \nnreals \:| \:\exists vs \in \V, (\mathit{curv}, vs) \in \E, (t,t') \gets \edgelab \left( (\mathit{curv}, vs) \right) \}$\; \lnlabel{ln: dwt}

	\For {$S_{\sf init} \in \{S~|~ \langle S,\mathit{curv} \rangle \in \VerInit\}$}
	{
		{$\beta \gets \mathit{LearnDiscrepancy}(S_{\sf init},\mathit{dt},\ell)$\;} \lnlabel{ln: disc}
		{$  RT \gets \computeRT(S_{\sf init},\mathit{dt},\beta)$\;} \lnlabel{ln: reachtube}
		{$RS \gets RS \cup \langle RT, \mathit{curv} \rangle $\;}
		\For {$\mathit{nextv} \in \mathit{curv}.succ$} 
		{
			$(t,t') \gets \edgelab \left( (\mathit{curv}, nextv) \right)$\; \lnlabel{ln: forloopbegin}
			$\VerInit \gets \VerInit \cup \langle  \mathit{Restr}(RT,(t,t')), nextv\rangle$\; \lnlabel{ln: forloopend}
		}
	}
}
\Return $RS$ \;
\end{algorithm}

The invariant verification algorithm $\mathit{VerifySafety}$ decides
safety of $\H$ with respect to a given unsafe set $\U$ and uses
$\GraphReach$. The detailed pseudocode appears in
Appendix~\ref{appendix:safetyveri}.  This algorithm proceeds in a way
similar to the simulation-based verification algorithms for dynamical
and hybrid systems~\cite{DMV:EMSOFT2013,fan2016automatic}.
Given initial set $\Theta$ and transition graph $G$ of $\H$, this algorithm partitions $\Theta$ into several subsets, and then for each subset $S$ it checks whether the computed  over-approximate reachtube $RS$ from $S$ intersects with $\U$:
 \begin{inparaenum}[(a)]
 \item If $RS$  is disjoint, the system is safe starting from $S$; 
 \item if certain part of a reachtube $RT$ is contained in $\U$, the system is declared as  unsafe and $RT$ with the the corresponding path of the graph are returned as counter-example witnesses; 
 \item if neither of the above conditions hold, then the algorithm performs refinement to get a more precise over-approximation of $RS$.
 \end{inparaenum} 
Several refinement strategies are implemented in {\toolname} to
accomplish the last step.  Broadly, these strategies rely on splitting
the initial set $S$ into smaller sets (this gives tighter discrepancy
in the subsequent vertices) and splitting the edge labels of $G$ into
smaller intervals (this gives smaller initial sets in the vertices).
 
The above description focuses on invariant properties, but the
algorithm and our implementation in {\toolname} can verify a useful
class of temporal properties.  These are properties in which the time
constraints only refer to the time since the last mode transition.
For example, for the $\auto{Powertrn}$ benchmark the tool verifies
requirements like ``after $4$s in $\Lmode{normal}$ mode, the air-fuel
ratio should be contained in $[14.6,14.8]$ and after $4$s in
$\Lmode{powerup}$ it should be in $[12.4,12.6]$''.

\vspace{-10pt}
\paragraph{Correctness}
Given a correct discrepancy function for each mode, we can prove the
soundness and relative completeness of
Algorithm~\ref{alg:safetyveri}. This analysis closely follows the
proof of Theorem~19 and Theorem~21
in~\cite{duggirala2015dynamic}. Combining this with the probabilistic
correctness of the $\mathit{LearnDiscrepancy}$, we obtain the
following probabilistic soundness guarantee.

\begin{theorem}
	If the $\beta$'s returned by $\mathit{LearnDiscrepancy}$ 
	are always  discrepancy functions for corresponding modes, then $\mathit{VerifySafety}(\H,U)$ (Algorithm~\ref{alg:safetyveri}) is sound. That is, if it outputs ``SAFE'', then $\H$  is safe with respect to $\U$ and if it outputs ``UNSAFE'' then there exists an execution of $\H$ that enters $\U$.
\end{theorem}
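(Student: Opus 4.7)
The plan is to factor the soundness argument into an over-approximation lemma for $\GraphReach$ and a witness-extraction argument for the refinement step in $\mathit{VerifySafety}$. Concretely, the key invariant to establish is that upon termination of $\GraphReach$, for every vertex $v \in \V$ the union $\bigcup \{RT \mid \langle RT,v\rangle \in RS\}$ over-approximates $\reach{\H}^v$. Granted this invariant, the SAFE branch---in which every stored $RT$ is disjoint from $\U$---immediately yields $\reach{\H} \cap \U = \emptyset$, and the UNSAFE branch yields a concrete execution of $\H$ entering $\U$ as described below.

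I would prove the invariant by induction on the topological order $\mathit{Order}$ of $G$. For $v_{\sf init}$, $\VerInit$ contains only $\langle \Theta, v_{\sf init}\rangle$; the hypothesis that $\mathit{LearnDiscrepancy}$ returns a correct discrepancy function $\beta$, combined with the correctness of $\computeRT$ inherited from the dynamical-systems algorithm in~\cite{DMV:EMSOFT2013} (Theorem~19 of~\cite{duggirala2015dynamic}), guarantees that the computed reachtube contains every trajectory from $\Theta$ in mode $\vertlab(v_{\sf init})$ over the relevant duration, covering all states reachable at $v_{\sf init}$. For the inductive step at a vertex $\mathit{curv}$, consider any execution $\alpha$ of $\H$ whose path visits $\mathit{curv}$: it factors as a prefix ending at a predecessor $\mathit{prev}$, a discrete transition with elapsed time $t'$ in $\edgelab((\mathit{prev},\mathit{curv}))$, and a subsequent trajectory of mode $\vertlab(\mathit{curv})$. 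By the induction hypothesis applied to $\mathit{prev}$, the state at the transition instant lies in the over-approximating reachtube at $\mathit{prev}$; because $\mathit{dt}$ at $\mathit{prev}$ is the maximum outgoing transition time, the reachtube there is long enough that $\mathit{Restr}(RT,(t,t'))$ captures that state and adds it to $\VerInit$ for $\mathit{curv}$; one more application of discrepancy-based bloating from that initial set covers the subsequent trajectory, closing the induction.

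For the UNSAFE direction, the refinement loop in $\mathit{VerifySafety}$ only terminates with UNSAFE after it can exhibit a concrete simulation trace---produced by $\simulator$ at a leaf of the refinement tree---that itself enters $\U$; since simulations correspond to genuine trajectories in $\TL$, and since the chain of predecessor simulations along the path in $G$ was produced under valid transition times in the edge labels, these trajectories concatenate into a bona fide execution of $\H$ that enters $\U$, exactly mirroring Theorem~21 of~\cite{duggirala2015dynamic} at the graph level. The main obstacle I anticipate lies in the inductive step: a single vertex accumulates many initial sets in $\VerInit$ from different incoming edges, and each edge label is an interval rather than a single time point, so one must argue carefully that $\mathit{Restr}(RT,(t,t'))$ captures \emph{every} possible branching instant within the interval, and that using the maximum of outgoing times $\mathit{dt}$ at the predecessor makes its reachtube long enough to include those instants before restriction. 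Acyclicity of $G$ makes the induction well-founded and ensures it terminates in finitely many steps, which is what ultimately promotes per-mode discrepancy correctness into a global soundness guarantee for $\H$.
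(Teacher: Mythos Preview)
Your approach matches the paper's: it too reduces soundness to the per-mode correctness of discrepancy-based reachtube computation and defers the core analysis to Theorems~19 and~21 of~\cite{duggirala2015dynamic}. In fact you supply considerably more detail than the paper, which gives no proof beyond that citation; your induction on the topological order, together with the observation that $\mathit{dt}$ is chosen large enough so that $\mathit{Restr}(RT,(t,t'))$ captures every admissible transition instant, is exactly the right way to lift the single-mode over-approximation guarantee to all of $G$.

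One point in your UNSAFE argument deserves tightening. You write that the simulations at successive vertices ``concatenate into a bona fide execution of $\H$,'' but that is not quite how the witness arises: the simulation at vertex $v$ is seeded from a point of $S_{\sf init}$, which is a \emph{bloated} set taken from the predecessor's reachtube, not from the endpoint of any predecessor simulation, so the traces need not chain end-to-end. The correct argument---and the one underlying Theorem~21 of~\cite{duggirala2015dynamic}---is that when an entire tube segment at $v$ lies inside $\U$, \emph{every} trajectory starting in $S_{\sf init}$ enters $\U$ there; since your SAFE-direction invariant already shows $S_{\sf init}$ over-approximates the (non-empty) true reach set at $v$, some genuinely reachable state $x$ lies in $S_{\sf init}$, and the actual execution of $\H$ that reaches $x$ and then follows mode $\vertlab(v)$ is the required counterexample. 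With that adjustment your argument is complete.
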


\subsection{Experiments on safety verification}
\label{sec:reachexp}

\begin{figure}
	\centering
	\begin{subfigure}[b]{0.45\textwidth}
		\includegraphics[width=\textwidth,trim={1.0cm 0.8cm 1.5cm 0.8cm},clip]{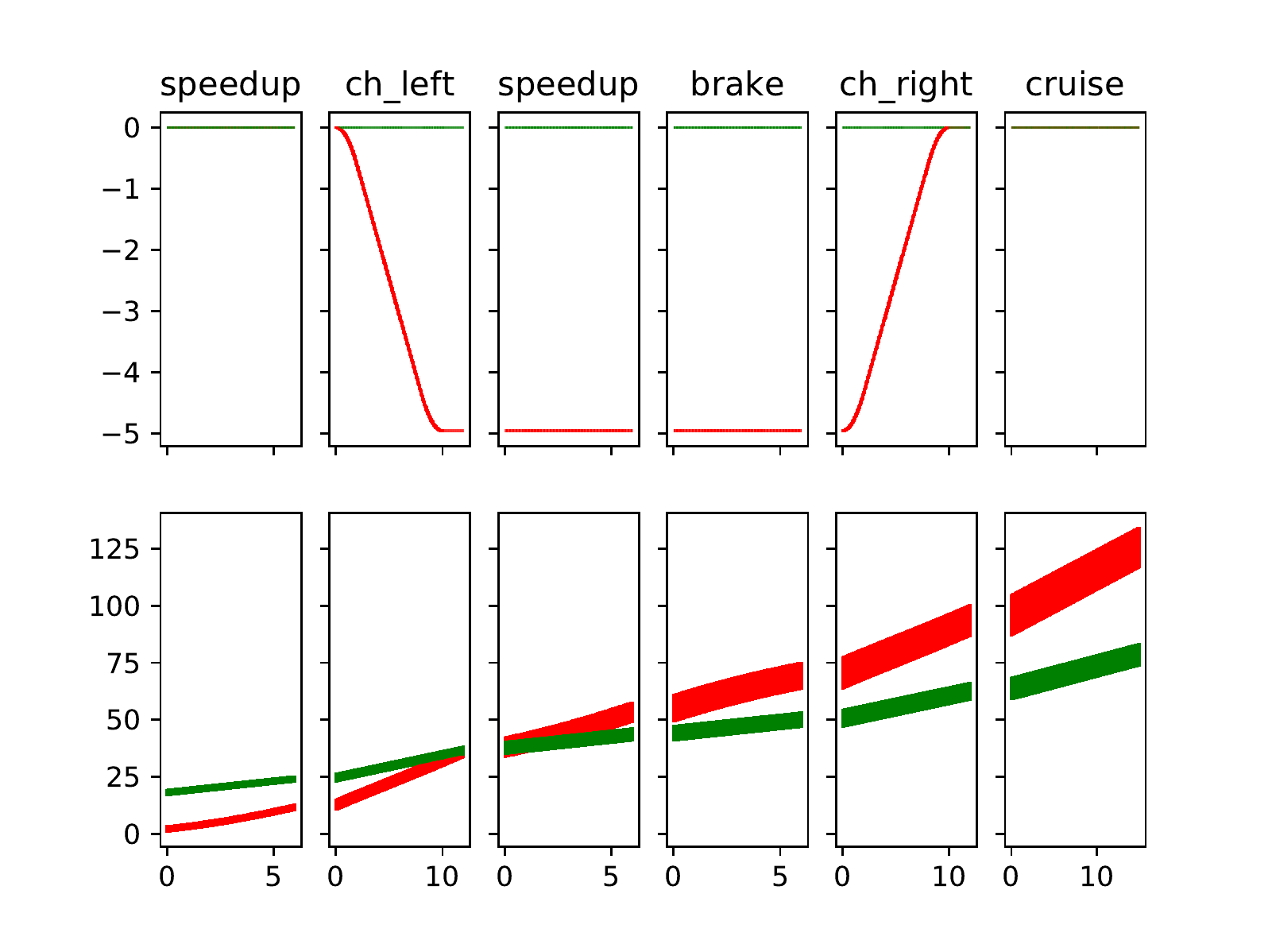}
		\caption{Safe reachtube. 
			}
		\label{fig:AutoPassingA}
	\end{subfigure}
	~ ~
	\begin{subfigure}[b]{0.45\textwidth}
		\includegraphics[width=\textwidth,trim={1.0cm 0.8cm 1.5cm 0.8cm},clip]{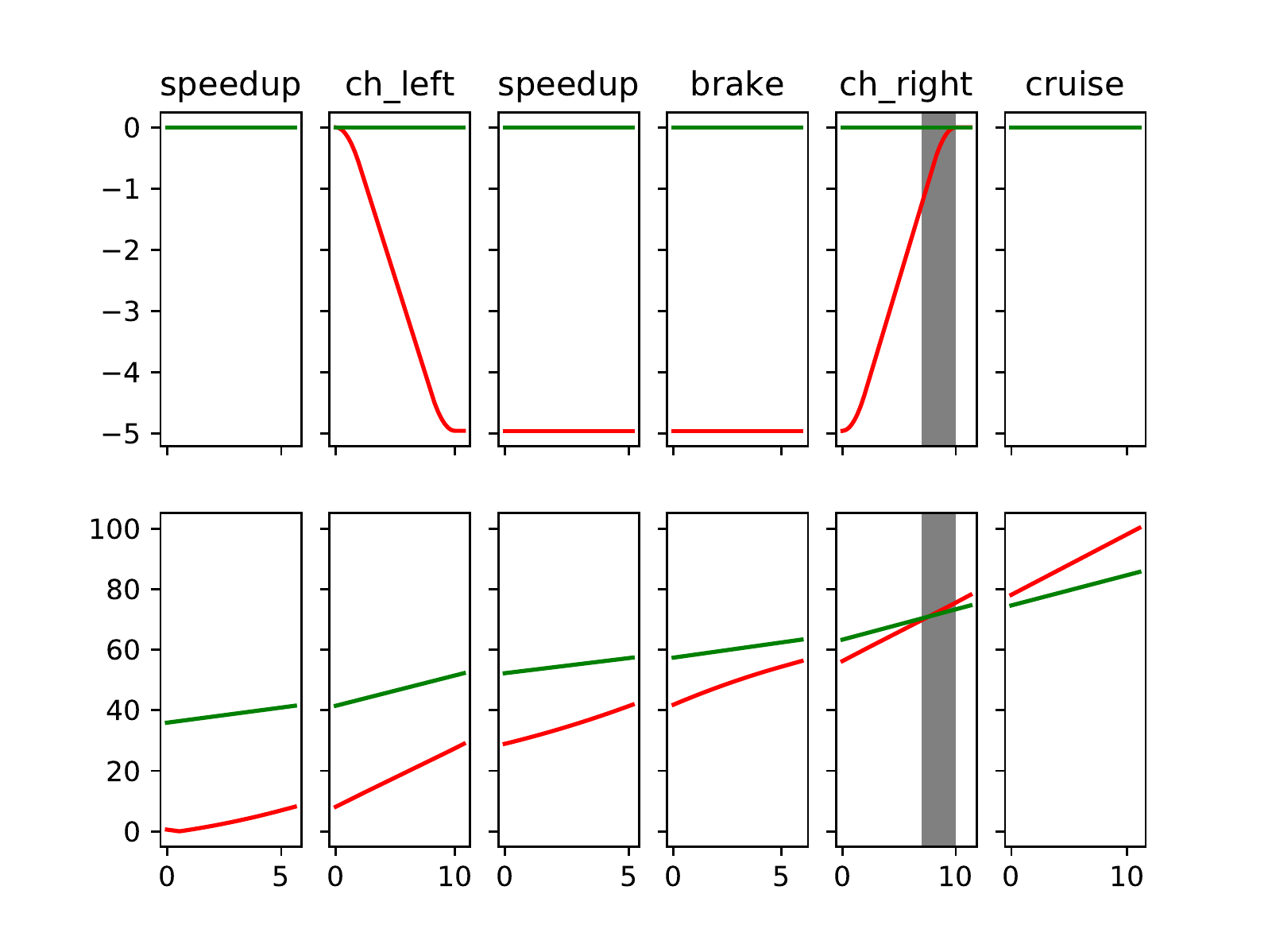}
		\caption{Unsafe execution. 
			}
		\label{fig:AutoPassingB}
	\end{subfigure}
\vspace{-8pt}
\caption{\small $\auto{AutoPassing}$  verification. Vehicle A's (red) modes are shown above each subplot. Vehicle B (green) is in \Lmode{cruise}. 
	Top: $sx_A,sx_B$. Bottom: $sy_A, sy_B$.
	 }
\label{fig:AutoPassing}
\end{figure}

The algorithms have been implemented in {\toolname} and
have been used to automatically verify the benchmarks from
Section~\ref{sec:prelims} and an Automatic Transmission System (Appendix~\ref{ssec:gear}). 
The transition graph, the initial set, and unsafe set are given in a text file. {\toolname} uses simulators for modes, and outputs either ``Safe'' of ``Unsafe''. Reachtubes or counter-examples computed during the analysis are also stored in text files.

The implementation is in Python using the MatLab's Python API for accessing the \Simulink ~simulators. Py-GLPK~\cite{python-glpk-package} is used to find the
parameters of discrepancy functions;
either global (GED) or piece-wise (PED) discrepancy can be selected by
the user.  Z3~\cite{de2008z3} is used for reachtube operations.
	At this stage, all the benchmarks we are working on heavily
  rely on {\Mathworks} {\Simulink}. We don't have a public
  {\Mathworks} license to release the tool, and it is complicated for
  the users to build a connection between {\toolname} and
  their own {\Simulink} models. We will release {\toolname} soon after
  we move the blackbox benchmarks to a different open source software.

Figure~\ref{fig:AutoPassing} shows example plots of computed safe
reachtubes and counter-examples for a simplified
$\auto{AutoPassing}$ in which vehicle B stays in the \Lmode{cruise} always.
As before, vehicle A goes through a sequence of  modes 
to overtake B. 
Initially, for both $i \in \{A,B\}$, $sx_i = vx_i = 0$ and $vy_i = 1$, i.e., both are cruising at constant speed at the center of the right lane; initial positions along the lane are $sy_A\in [0,2], sy_B \in
[15,17]$.  Figure~\ref{fig:AutoPassingA} shows the lateral positions
($sx_A$ in red and $sx_B$ in green, in the top subplot), and the
positions along the lane ($sy_A$ in red and $sy_B$ in green, in the
bottom plot).
 Vehicle A moves to left lane ($sx$ decreases) and then
back to the right, while B remains in the right lane, as A overtakes
B (bottom plot).  The unsafe set $(|sx_A-sx_B|<2 \ \&\ |sy_A-sy_B|<2)$ is proved to be disjoint from computed reachtube.  With a different
initial set, $sy_B \in [30,40]$, \toolname\ finds counter-example (Figure~\ref{fig:AutoPassingB}).


\begin{table}
\begin{center}
  \centering
\resizebox{\columnwidth}{!}{%
    \begin{tabular}{|c|rlcccr|}
    \hline
    Model &  TH &  Initial set & $\U$ & Ref & Safe & Runtime  \\
    \hline
    \makecell[c]{$\auto{Powertrn}$ \\ (5 vers, 6 edges)}   & 80    & \makecell[l]{$\lambda \in [14.6,14.8]$} & $\U_{p}$      & 2     & \cmark & 217.4s \\
\cline{1-7}
    {$\auto{AutoPassing}$}   & 50    & $sy_A \in [-1,1]$ $sy_B \in [14,16]$ & $\U_{c}$    & 4     & \cmark  & 208.4s \\ \cline{2-7}
     (12 vers, 13 edges) & 50    &  $sy_A \in [-1,1]$ $sy_B \in [4,6.5]$ & $\U_{c}$     & 5     & \xmark & 152.5s \\ \hline
    {$\auto{Merge}$ } &   50    & $sx_A \in [-5,5]$ $sy_B\in [-2,2]$ & $\U_{c}$      & 0     & \cmark  & 55.0s \\ \cline{2-7}
     (7 vers, 7 edges) &   50    &  $sx_A \in [-5,5]$  $sy_B \in [2,10]$ & $\U_{c}$      & -     & \xmark & 38.7s \\ \hline
    {$\auto{Merge3}$} &   50    &  \makecell[l]{$sy_A \in [-3,3]$ $sy_B \in [14, 23]$\\$sy_C \in [36, 45]$} & $\U_{c}$      & 4     & \cmark  & 197.6s \\ \cline{2-7}
     (6 vers, 5 edges) &   50    & \makecell[l]{$sy_A \in [-3,3]$ $sy_B \in [14,15]$ \\ $sy_C \in [16, 20]$} & $\U_{c}$     & -     & \xmark & 21.3s \\ \hline
   \makecell[c]{ {$\auto{ATS}$ }\\ (4 vers, 3 edges)} &  50    & Erpm $\in [900,1000]$& $\U_{t}$    & 2     & \cmark  & 109.2s \\ \cline{2-7}
    \hline
    \end{tabular}%
}
\vspace{-14pt}
  \caption{\small Safety verification results. Numbers below benchmark names: \# vertices and edges of $G$, TH: duration of shortest path in $G$,  Ref: \# refinements performed; Runtime: overall running time.}
\label{table:results}
  \end{center}
\end{table}%

Table~\ref{table:results} summarizes some of the verification results obtained using \toolname. $\auto{ATS}$ is an automatic
transmission control system (see Appendix~\ref{ssec:gear} for more details). These experiments were
performed on a laptop with Intel Core i7-6600U CPU and 16 GB RAM. 
%
The initial range of only the salient continuous variables are shown in the table.  The unsafe sets are discussed with the model description. For example $\U_c$ means  two vehicles are too close.
%
%
For all the benchmarks, the algorithm terminated in a few minutes which includes the time to simulate, learn discrepancy, generate reachtubes, check the safety of the reachtube, over all refinements.

%
 For the results presented in Table~\ref{table:results}, we used GED. The reachtube generated by PED for $\auto{Powertrn}$ is more precise, but for the rest, the reachtubes and the verification times using both GED and PED were comparable.  
%
%
%
%
%
%
%
In addition to the $\mathit{VerifySafety}$ algorithm, \toolname\ also
looks for counter-examples by quickly generating random executions of
the hybrid system.  If any of these executions is found to be unsafe,
\toolname\ will return ``Unsafe'' without starting the
$\mathit{VerifySafety}$ algorithm.



\section{Reasoning principles for trace containment}
\label{sec:trace-containment}

For a fixed unsafe set $\U$ and two hybrid systems $\H_1$ and $\H_2$,
proving $\reach{\H_1} \subseteq \reach{\H_2}$ and the safety of
$\H_2$, allows us to conclude the safety of
$\H_1$. Proposition~\ref{prop:hybrid-contain} establishes that proving
containment of traces, trajectories, and initial sets of two hybrid
systems, ensures the containment of their respective reach sets. These
two observations together give us a method of concluding the safety of
one system, from the safety of another, provided we can check trace
containment of two graphs, and trajectory containment of two
trajectory sets. In our examples, the set of modes $\L$ and the set of trajectories $\TL$ is often the same between the hybrid systems we care about. So in this section present different reasoning principles to check trace containment between two graphs.

Semantically, a transition graph $G$ can be viewed as one-clock timed automaton, i.e., one can constructed a timed automaton $T$ with
one-clock variable such that the timed traces of $T$ are exactly the
traces of $G$. This observation, coupled with the fact that checking
the timed language containment of one-clock timed
automata~\cite{ow-lics} is decidable, allows one to conclude that
checking if $G_1 \preceq_{\modemap} G_2$ is decidable. However the
algorithm in~\cite{ow-lics} has non-elementary complexity. Our next
observation establishes that forward simulation between graphs can be
checked in polynomial time. Combined with
Proposition~\ref{prop:graphsim}, this gives a simple sufficient condition for trace containment that can be efficiently checked.

\begin{proposition}
\proplabel{prop:check-sim}
Given graphs $G_1$ and $G_2$, and mode map $\modemap$, checking if
there is a forward simulation from $G_1$ to $G_2$ is in polynomial
time.
\end{proposition}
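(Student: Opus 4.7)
The plan is to compute the greatest relation satisfying conditions (b) and (c) of a forward simulation by a standard greatest fixpoint iteration, and then check whether it additionally satisfies the initial-vertex condition (a). First I observe that forward simulations are closed under union: if $\{R_\alpha\}$ is a family of relations all satisfying (a), (b), (c), then so does $\bigcup_\alpha R_\alpha$, since clause (c) is existential in its witnesses and any witnesses for $(v,u) \in R_\alpha$ still lie in the union. Consequently a forward simulation from $G_1$ to $G_2$ exists iff the unique greatest relation satisfying (b) and (c) also satisfies (a), so the whole problem reduces to computing that relation.

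Initialize $R_0 \subseteq \V_1 \times \V_2$ to be all pairs $(v,u)$ with $\modemap(\vertlab_1(v)) = \vertlab_2(u)$; this is the largest subset satisfying (b). Then iteratively prune: let $R_{i+1}$ be the set of $(v,u) \in R_i$ such that for every edge $(v,v') \in \E_1$ there exist successors $u_1,\ldots,u_k$ of $u$ in $G_2$ with $(v',u_j) \in R_i$ for each $j$ and $\edgelab_1((v,v')) \subseteq \bigcup_j \edgelab_2((u,u_j))$. Since $R_{i+1} \subseteq R_i$ and $|R_0| \le |\V_1|\cdot|\V_2|$, the iteration reaches a fixpoint $R^*$ in at most $|\V_1|\cdot|\V_2|$ rounds. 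The algorithm returns \emph{yes} iff $R^*$ satisfies (a), i.e., every vertex in $\V_{1 {\sf init}}$ appears as the first coordinate of some pair in $R^*$ whose second coordinate lies in $\V_{2 {\sf init}}$.

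The computational core of the pruning step is an interval-coverage test: given a target interval $I = \edgelab_1((v,v'))$ and candidates $\{\edgelab_2((u,u_j)) : (u,u_j) \in \E_2,\ (v',u_j) \in R_i\}$, decide whether their union contains $I$. Because every label is a closed bounded interval, this is the classical ``cover an interval by intervals'' problem, solvable by sorting the candidates by left endpoint and sweeping in $O(m \log m)$ time. Summing the pruning cost over all pairs and outgoing edges of $G_1$ gives an $O(|\V_1|\cdot|\V_2|\cdot|\E_1|\cdot|\E_2|\log|\E_2|)$ per-round bound, and multiplying by the round bound $|\V_1|\cdot|\V_2|$ yields an overall polynomial runtime.

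The correctness obligation is to show that $R^*$ is the greatest relation satisfying (b) and (c). Any such relation $R$ satisfies $R \subseteq R_0$ by (b), and a straightforward induction on $i$ shows $R \subseteq R_i$: whenever $(v,u) \in R$, property (c) of $R$ supplies witnesses $u_j$ with $(v',u_j) \in R \subseteq R_i$, so $(v,u)$ survives the $i$-th pruning. Conversely $R^*$ satisfies (b) by construction and (c) by the fixpoint property. The one place where the interval labels push the argument beyond the textbook finite-state simulation algorithm---and what I expect will need the most care to write up---is arguing correctness and polynomial complexity of the interval-coverage test; the rest is standard greatest-fixpoint reasoning on a finite poset.
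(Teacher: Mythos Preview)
Your proposal is correct and follows essentially the same greatest-fixpoint approach as the paper: both observe that forward simulations are closed under union, so the largest one can be computed by iterated pruning, after which one checks the initial-vertex condition. Two minor differences are worth noting. First, the paper exploits the fact that $G_1$ is a DAG to bound the number of pruning rounds by $|\V_1|$ (processing vertices in reverse topological order stabilizes each in one pass), whereas your generic $|\V_1|\cdot|\V_2|$ bound is looser though still polynomial. Second, the paper additionally remarks that the result can be derived from known decidability of timed simulation for timed automata~\cite{cerans}, with the one-clock restriction keeping the region construction polynomial; you bypass this by working directly with the graphs. Your explicit treatment of the interval-coverage subroutine is a detail the paper leaves implicit.
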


\begin{proof}
The result can be seen to follow from the algorithm for checking timed
simulations between timed automata~\cite{cerans} and the
correspondence between one-clock timed automata; the fact that the
automata have only one clock ensures that the region construction is
poly-sized as opposed to exponential-sized. However, in the special
case of transition graphs there is a more direct algorithm which does
not involve region construction that we describe here.

Observe that if $\{R_i\}_{i\in I}$ is a family of forward simulations
between $G_1$ and $G_2$ then $\cup_{i \in I} R_i$ is also a forward
simulation. Thus, like classical simulations, there is a unique
largest forward simulation between two graphs that is the greatest
fixpoint of a functional on relations over states of the transition
graph. Therefore, starting from the relation $\V_1 \times \V_2$, one
can progressively remove pairs $(v,u)$ such that $v$ is not simulated
by $u$, until a fixpoint is reached. Moreover, in this case, since
$G_1$ is a DAG, one can guarantee that the fixpoint will be reached in
$|\V_1|$ iterations.
\end{proof}

Executions of hybrid systems are for bounded time, and bounded number
of mode switches. This is because our transition graphs are acyclic
and the labels on edges are bounded intervals. Sequential composition
of graphs allows one to consider switching sequences that are longer
and of a longer duration. We now present observations that will allow
us to conclude the safety of a hybrid system with long switching
sequences based on the safety of the system under short switching
sequences. To do this we begin by observing simple properties about
sequential composition of graphs. In what follows, all hybrid systems
we consider will be over a fixed set of modes $\L$ and trajectory set
$\TL$. Also ${\sf id}$ will be identity function on $\L$.
Our first observation is that trace containment is consistent with
sequential composition.
\begin{proposition}
\label{prop:cong-seqcomp}
Let $G_i, G_i'$, $i \in \{1,2\}$, be four transition graphs over $\L$
such that $G_1\seqcomp G_2$ and $G_1'\seqcomp G_2'$ are defined, and
$G_i \preceq_{{\sf id}} G_i'$ for $i \in \{1,2\}$. Then $G_1\seqcomp
G_2 \preceq_{{\sf id}} G_1'\seqcomp G_2'$.
\end{proposition}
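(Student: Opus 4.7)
The plan is to take an arbitrary trace $\sigma \in \traces{G_1 \seqcomp G_2}$, decompose it into components corresponding to $G_1$ and $G_2$, lift each component using the containment hypotheses, and recombine the lifts inside $G_1' \seqcomp G_2'$. By Definition~\ref{def:seq-comp}, every maximal path of $G_1 \seqcomp G_2$ traverses $\V_1$ from $v_{1,\sf init}$ to $v_{1,\sf term}$ and then, via the modified outgoing edges, continues through $\V_2 \setminus \{v_{2,\sf init}\}$ to $v_{2,\sf term}$. Consequently $\sigma$ factors uniquely as $\sigma_1 \paste \sigma_2$, where $\sigma_1 \in \traces{G_1}$ ends at the seam label $\ell_k := \vertlab_1(v_{1,\sf term}) = \vertlab_2(v_{2,\sf init})$, $\sigma_2 \in \traces{G_2}$ begins at $\ell_k$, and $\paste$ denotes the concatenation that identifies this shared label.

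Next, I would invoke the two containment hypotheses. Applying $G_1 \preceq_{\sf id} G_1'$ to $\sigma_1$ yields $\sigma_1' \in \traces{G_1'}$ having $\sigma_1$ as a prefix; applying $G_2 \preceq_{\sf id} G_2'$ to $\sigma_2$ yields $\sigma_2' \in \traces{G_2'}$ having $\sigma_2$ as a prefix. Because prefixes preserve the initial label and $\sigma_2$ begins with $\ell_k$, the initial label of $\sigma_2'$ is also $\ell_k$, i.e., $\vertlab_2'(v_{2',\sf init}) = \ell_k$; combined with the hypothesis that $G_1' \seqcomp G_2'$ is defined, this forces $\vertlab_1'(v_{1',\sf term}) = \ell_k$ as well, so the seam labels align across both compositions. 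Setting $\sigma' = \sigma_1' \paste \sigma_2'$, the underlying $G_1'$-path of $\sigma_1'$ ends at $v_{1',\sf term}$ (merged with $v_{2',\sf init}$ in the composition), so the $G_2'$-path for $\sigma_2'$ continues naturally to $v_{2',\sf term}$, giving $\sigma' \in \traces{G_1' \seqcomp G_2'}$. Since $\sigma_i$ is a prefix of $\sigma_i'$ for each $i \in \{1,2\}$, the concatenation $\sigma = \sigma_1 \paste \sigma_2$ is then a prefix of $\sigma_1' \paste \sigma_2' = \sigma'$.

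The step I expect to be the main obstacle is the compatibility of the lifts at the seam when $\sigma_1'$ strictly extends $\sigma_1$: if the underlying path of $\sigma_1'$ visits additional vertices of $G_1'$ between $\sigma_1$'s last vertex and $v_{1',\sf term}$, then in $\sigma_1' \paste \sigma_2'$ the position right after the $k$-th label is occupied by a new time label from $\sigma_1'$ rather than the time $t_k$ from $\sigma_2$, which would break the prefix relationship. I would handle this by choosing $\sigma_1'$ to be a specific minimal extension --- concretely, the trace whose underlying path first reaches $v_{1',\sf term}$ right at the vertex $v_k'$ where the lift of $\sigma_1$ ends --- and verifying that such a choice is always possible using the fact that $v_{1',\sf term}$ has label $\ell_k$ and that any path-prefix of a $G_1'$-trace whose last vertex is labeled $\ell_k$ can itself be closed off into a trace that ends at $v_{1',\sf term}$. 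Once that minimality is in place, the seam is transparent and the prefix comparison reduces to the straightforward componentwise check already described.
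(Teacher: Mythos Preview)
You correctly isolate the only nontrivial step: when $\sigma_1'$ properly extends $\sigma_1$, the entry immediately following the seam label in $\sigma_1' \paste \sigma_2'$ comes from the tail of $\sigma_1'$ rather than from $\sigma_2$, and the prefix relation can break. Your repair, however, does not go through. The claim that ``any path-prefix of a $G_1'$-trace whose last vertex is labeled $\ell_k$ can itself be closed off into a trace that ends at $v_{1',\sf term}$'' is unjustified: a vertex of $G_1'$ carrying label $\ell_k$ need not be $v_{1',\sf term}$, and since paths are required to be maximal there is in general no way to terminate there. Nothing in the hypothesis $G_1 \preceq_{\sf id} G_1'$ lets you choose $\sigma_1'$ equal to $\sigma_1$.

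In fact the obstacle you identified is fatal to the proposition as stated, not just to your argument. Take $G_1$ to be $A \xrightarrow{[1,1]} B$ and $G_1'$ to be $A \xrightarrow{[1,1]} B \xrightarrow{[5,5]} B$ (three vertices, the last two both labeled $B$). Then $G_1 \preceq_{\sf id} G_1'$ because $(A,1,B)$ is a prefix of $(A,1,B,5,B)$. Let $G_2 = G_2'$ be $B \xrightarrow{[2,2]} C$. Both compositions are defined (seam label $B$ on each side), the unique trace of $G_1\seqcomp G_2$ is $(A,1,B,2,C)$, and the unique trace of $G_1'\seqcomp G_2'$ is $(A,1,B,5,B,2,C)$; the former is not a prefix of the latter, so $G_1\seqcomp G_2 \not\preceq_{\sf id} G_1'\seqcomp G_2'$. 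The paper asserts the proposition without proof; under the paper's definitions it fails in general, and would require a stronger hypothesis on the $G_1$ side (for instance, that every trace of $G_1$ is already a full trace of $G_1'$, not merely a prefix of one) to become correct.
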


Next we observe that sequential composition of graphs satisfies the
``semi-group property''.
\begin{proposition}
\label{prop:semi-group}
Let $G_1,G_2$ be graphs over $\L$ for which $G_1\seqcomp G_2$ is
defined. Let $v_{1 {\sf term}}$ be the unique terminal vertex of
$G_1$. Consider the following hybrid systems: $\H = \langle \L,
\Theta, G_1\seqcomp G_2, \TL\rangle$, $\H_1 = \langle\L, \Theta, G_1,
\TL\rangle$, and $\H_2 = \langle\L, \reach{\H_1}^{v_{1 {\sf term}}}, $$
G_2,$ $ \TL\rangle$. Then
$\reach{\H} = \reach{\H_1} \cup \reach{\H_2}$.
\end{proposition}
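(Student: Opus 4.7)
The plan is to prove the two inclusions $\reach{\H} \subseteq \reach{\H_1} \cup \reach{\H_2}$ and $\reach{\H_1} \cup \reach{\H_2} \subseteq \reach{\H}$ separately, using the structural fact that every maximal path of $G_1 \seqcomp G_2$ decomposes uniquely at the vertex $v_{1{\sf term}}$ into a maximal path of $G_1$ followed (via the glued edges) by a maximal path of $G_2$ with $v_{2{\sf init}}$ identified with $v_{1{\sf term}}$. This follows directly from Definition~\ref{def:seq-comp}: the only vertex of $\V$ with no incoming edge is $v_{1{\sf init}}$, the only vertex with no outgoing edge is $v_{2{\sf term}}$, and all edges either lie in $\E_1$, or come from $\E_2$ with $v_{2{\sf init}}$ replaced by $v_{1{\sf term}}$ (with the same time labels).

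For the $(\subseteq)$ direction, I would take an arbitrary reachable state $\langle x,\ell\rangle$ of $\H$ witnessed by an execution $\alpha = \langle\tau_1,\ell_1\rangle,\ldots,\ell_k$, a path $\pi = v_1,t_1,\ldots,v_k$, an index $i$, and a time $t' \in \tau_i.\dom$. Let $j$ be the unique index with $v_j = v_{1{\sf term}}$. Split into two cases. If $i \le j$, the prefix $v_1,t_1,\ldots,v_j$ is a maximal path of $G_1$ (initial in $G_1$, terminal in $G_1$, all intermediate edges come from $\E_1$ with the same labels), so $\langle\tau_1,\ell_1\rangle,\ldots,\ell_j$ is an execution of $\H_1$ and $\langle x,\ell\rangle \in \reach{\H_1}$. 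If $i > j$, the suffix obtained by renaming $v_j$ to $v_{2{\sf init}}$ is a maximal path of $G_2$, and the trajectory suffix $\langle\tau_j,\ell_j\rangle,\ldots,\ell_k$ starts at $\tau_j.\fstate = \tau_{j-1}.\lstate$, which is reached at vertex $v_{1{\sf term}}$ by the (legal) execution $\langle\tau_1,\ell_1\rangle,\ldots,\ell_j$ of $\H_1$. Hence $\tau_j.\fstate \in \reach{\H_1}^{v_{1{\sf term}}}$, so this suffix is an execution of $\H_2$ and $\langle x,\ell\rangle \in \reach{\H_2}$.

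For the $(\supseteq)$ direction, I would reverse the construction. Given $\langle x,\ell\rangle \in \reach{\H_1}$ witnessed by an execution reaching a terminal vertex $v_{1{\sf term}}$, I glue on any continuation along a maximal path of $G_2$ starting at $v_{2{\sf init}}$, using the assumption (from the trajectory section) that trajectories exist for unbounded time from every initial state in a given mode, together with the equality $\vertlab_1(v_{1{\sf term}}) = \vertlab_2(v_{2{\sf init}})$ required for $G_1 \seqcomp G_2$ to be defined. This yields an execution of $\H$ reaching $\langle x,\ell\rangle$. Given $\langle x,\ell\rangle \in \reach{\H_2}$, by definition of $\H_2$ its execution starts from some state in $\reach{\H_1}^{v_{1{\sf term}}}$, which is itself the endpoint of some $\H_1$-execution; concatenating the two executions along the matching state and mode yields an execution of $\H$ along a path of $G_1 \seqcomp G_2$ that reaches $\langle x,\ell\rangle$.

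The only mildly delicate step is the concatenation bookkeeping in both directions: one has to verify that after identifying $v_{2{\sf init}}$ with $v_{1{\sf term}}$ the glued edge labels match the labels in $G_1 \seqcomp G_2$ (which is built directly into Definition~\ref{def:seq-comp}(d)), and that the trajectory-matching condition $\tau_{i+1}.\fstate = \tau_i.\lstate$ is preserved across the gluing point (which is exactly what the definition of $\reach{\H_1}^{v_{1{\sf term}}}$ as the initial set of $\H_2$ guarantees). No deep argument is required; this is essentially a semi-group/concatenation argument, as the proposition's name suggests.
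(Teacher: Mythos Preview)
The paper states this proposition without proof, so there is no argument to compare against; your two-inclusion decomposition via splitting and gluing paths at $v_{1{\sf term}}$ is exactly the natural proof, and it is correct in spirit.

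One small bookkeeping point is worth tightening. In the $(\subseteq)$ direction your case split places $i = j$ (the index at $v_{1{\sf term}}$) into $\reach{\H_1}$. But in $\H_1$ the execution has the form $\langle\tau_1,\ell_1\rangle,\ldots,\langle\tau_{j-1},\ell_{j-1}\rangle,\ell_j$ with no trajectory $\tau_j$ attached to the terminal vertex, whereas in $\H$ the state $\tau_j(t')$ for $t'>0$ is reached while dwelling at $v_j=v_{1{\sf term}}$ along a genuine trajectory $\tau_j$. Those states are actually accounted for in $\reach{\H_2}$, since $\tau_j.\fstate = \tau_{j-1}.\lstate \in \reach{\H_1}^{v_{1{\sf term}}}$ is an initial state of $\H_2$ and the (deterministic) trajectory $\tau_j$ is then replayed at the initial vertex $v_{2{\sf init}}$ of $G_2$. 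So the clean split is $i<j$ versus $i\geq j$. This is a minor adjustment, and part of the confusion stems from the paper's own definition of ``reachable at vertex $v$'' being slightly informal at terminal vertices; your argument goes through once you read $\reach{\H_1}^{v_{1{\sf term}}}$ as the set of last states of executions of $\H_1$, which is clearly the intent.
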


Consider a graph $G$ such that $G\seqcomp G$ is defined. Let $\H$ be
the hybrid system with transition graph $G$, and $\H'$ be the hybrid
system with transition graph $G\seqcomp G$; the modes, trajectories,
and initial set for $\H$ and $\H'$ are the same. Now by
Proposition~\ref{prop:seqcomp-containment}
and~\ref{prop:hybrid-contain}, we can conclude that $\reach{\H}
\subseteq \reach{\H'}$. Our main result of this section is that under
some conditions, the converse also holds. This is useful because it
allows us to conclude the safety of $\H'$ from the safety of $\H$. In
other words, we can conclude the safety of a hybrid system for long, possibly unbounded, switching sequences (namely $\H'$) from the safety of the system under short switching sequences (namely $\H$).
\begin{theorem}
\label{thm:seqcomp-mainresult}
Suppose $G$ is such that $G\seqcomp G$ is defined. Let $v_{{\sf
    term}}$ be the unique terminal vertex of $G$. For natural number
$i \geq 1$, define $\H_i = \langle \L, \Theta, G^i, \TL\rangle$, where
$G^i$ is the $i$-fold sequential composition of $G$ with itself. In
particular, $\H_1 = \langle \L, \Theta, G, \TL\rangle$. If
$
\reach{\H_1}^{v_{{\sf term}}} \subseteq \Theta
$
then for all $i$, $\reach{\H_i} \subseteq \reach{\H_1}$.
\end{theorem}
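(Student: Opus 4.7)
I plan to proceed by induction on $i$, but I will need to carry two statements simultaneously in the inductive hypothesis, because the bare conclusion $\reach{\H_i} \subseteq \reach{\H_1}$ is not strong enough to drive the induction forward. Let $v_{i,\sf term}$ denote the (unique) terminal vertex of $G^i$. Observe that by construction of sequential composition, $v_{i,\sf term}$ is inherited from the rightmost copy of $G$, and in particular it carries the same mode label as $v_{\sf term}$. My plan is to show by induction on $i$ the conjunction
\[
(\star)_i: \quad \reach{\H_i} \subseteq \reach{\H_1} \ \text{ and }\ \reach{\H_i}^{v_{i,\sf term}} \subseteq \Theta.
\]
The base case $i=1$ is immediate: the first clause is tautological, and the second is exactly the hypothesis.

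For the inductive step, write $G^{i+1} = G^i \seqcomp G$ (which is defined since the terminal mode of $G^i$ equals the initial mode of $G$, inherited from $G\seqcomp G$ being defined). I would invoke the semi-group property (Proposition~\ref{prop:semi-group}) with $G_1 := G^i$ and $G_2 := G$, yielding
\[
\reach{\H_{i+1}} \;=\; \reach{\H_i}\, \cup\, \reach{\H'},\qquad \H' \define \langle \L,\, \reach{\H_i}^{v_{i,\sf term}},\, G,\, \TL \rangle.
\]
From $(\star)_i$ the initial set of $\H'$ satisfies $\reach{\H_i}^{v_{i,\sf term}} \subseteq \Theta$, so $\H'$ differs from $\H_1$ only by having a smaller initial set. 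By Proposition~\ref{prop:hybrid-contain} applied with identical modes, trajectories and graph and with $\modemap = {\sf id}$, this monotonicity of reach in the initial set gives $\reach{\H'} \subseteq \reach{\H_1}$, and in fact $\reach{\H'}^{v_{\sf term}} \subseteq \reach{\H_1}^{v_{1,\sf term}} \subseteq \Theta$. Combining,
\[
\reach{\H_{i+1}} \;\subseteq\; \reach{\H_1} \cup \reach{\H_1} \;=\; \reach{\H_1},
\]
establishing the first clause of $(\star)_{i+1}$. For the second clause, note that $v_{(i+1),\sf term}$ belongs to the $\V_2$-side of the composition $G^i \seqcomp G$ and is \emph{not} a vertex of $G^i$; hence any state reachable at $v_{(i+1),\sf term}$ in $\H_{i+1}$ is reachable at the corresponding $v_{\sf term}$ in $\H'$, so
\[
\reach{\H_{i+1}}^{v_{(i+1),\sf term}} \;=\; \reach{\H'}^{v_{\sf term}} \;\subseteq\; \Theta,
\]
closing the induction.

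The only subtle point—and what I would expect to be the main obstacle—is the second conjunct of the invariant: one has to keep track of reachability \emph{at the specific terminal vertex}, not merely of the overall reach set, and one needs the monotonicity statement $\Theta' \subseteq \Theta \Rightarrow \reach{\langle\L,\Theta',G,\TL\rangle}^v \subseteq \reach{\langle\L,\Theta,G,\TL\rangle}^v$ at a fixed vertex. This follows from Proposition~\ref{prop:hybrid-contain} (or more directly from the definition of executions, since every execution from $\Theta'$ is an execution from $\Theta$), but it deserves to be stated explicitly as a small lemma before the main induction. Beyond this bookkeeping, the argument is a clean induction driven by the semi-group decomposition of $G^{i+1}$.
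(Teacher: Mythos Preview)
Your proof is correct, but the paper takes a slightly different and simpler route. You decompose $G^{i+1} = G^i \seqcomp G$ (peeling off a copy on the right), which forces you to strengthen the induction hypothesis with the extra clause $\reach{\H_i}^{v_{i,\sf term}} \subseteq \Theta$. The paper instead uses associativity to decompose $G^{i+1} = G \seqcomp G^i$ (peeling off on the left). With this orientation, the first piece in the semi-group property is always $\H_1$ itself, so the set fed into the second piece is always $\reach{\H_1}^{v_{\sf term}} = \Theta_1$, which is contained in $\Theta$ directly by hypothesis. Defining $\H_i' = \langle \L, \Theta_1, G^i, \TL\rangle$, one gets $\reach{\H_{i+1}} = \reach{\H_1} \cup \reach{\H_i'} \subseteq \reach{\H_1} \cup \reach{\H_i}$ and the bare induction hypothesis $\reach{\H_i} \subseteq \reach{\H_1}$ suffices. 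In short: your claim that ``the bare conclusion is not strong enough to drive the induction forward'' is true only for the right decomposition you chose; the left decomposition sidesteps the bookkeeping about terminal-vertex reach sets entirely. Both arguments are valid, and yours has the minor advantage of making the invariant $\reach{\H_i}^{v_{i,\sf term}} \subseteq \Theta$ explicit, which is independently useful.
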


\begin{proof}
Let $\Theta_1 = \reach{\H_1}^{v_{{\sf term}}}$. From the condition in
the theorem, we know that $\Theta_1 \subseteq \Theta$. Let us define
$\H_i' = \langle \L, \Theta_1, G^i, \TL\rangle$. Observe that from
Proposition~\ref{prop:hybrid-contain}, we have $\reach{\H_i'} \subseteq
\reach{\H_i}$.

The theorem is proved by induction on $i$. The base case (for $i = 1$)
trivially holds. For the induction step, assume that $\reach{\H_i}
\subseteq \reach{\H_1}$. Since $\seqcomp$ is associative, using
Proposition~\ref{prop:semi-group} and the induction hypothesis, we
have
$
\reach{\H_{i+1}} = \reach{\H_1} \cup \reach{\H_i'} \subseteq
    \reach{\H_1} \cup \reach{\H_i} = \reach{\H_1}.
$
\end{proof}

Theorem~\ref{thm:seqcomp-mainresult} allows one to determine the set
of reachable states of a set of modes $\L$ with respect to graph
$G^i$, provided $G$ satisfies the conditions in the statement. This
observation can be generalized. If a graph $G_2$ satisfies conditions
similar to those in Theorem~\ref{thm:seqcomp-mainresult}, then using
Proposition~\ref{prop:semi-group}, we can conclude that the reachable
set with respect to graph $G_1\seqcomp G_2^i\seqcomp G_3$ is contained
in the reachable set with respect to graph $G_1\seqcomp G_2\seqcomp
G_3$. The formal statement of this observation and its proof is
skipped in the interest of space, but we will use it in our
experiments.

\subsection{Experiments on trace containment reasoning}
\label{sec: sub_exp}

\paragraph{Graph simulation}
Consider the $\auto{AEB}$ system of Section~\ref{ssec:adas} with the
scenario where Vehicle B is stopped ahead of vehicle A, and A transits
from \Lmode{cruise} to \Lmode{em\_brake} to avoid colliding with B.
In the actual system ($G_2$ of Figure~\ref{fig: em_brake_graphs}), two different sensor systems
trigger the obstacle detection and emergency braking at time intervals
$[1,2]$ and $[2.5,3.5]$ and take the system from vertex $0$
(\Lmode{cruise}) to two different vertices labeled with
\Lmode{em\_brake}.

\begin{figure}[ht]
    \begin{subfigure}[b]{0.3\textwidth}
        \includegraphics[width=\textwidth]{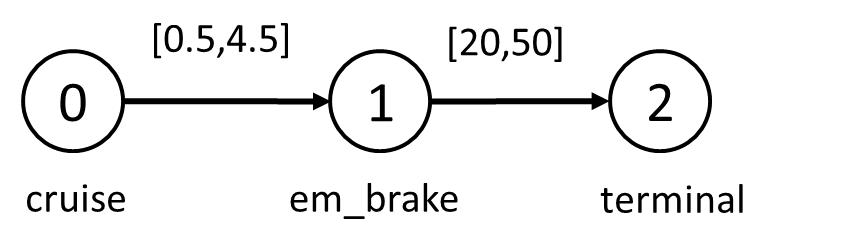}
        \caption{\small Transition graph $G_1$.}
        \label{fig: em_brake_graphsA}
    \end{subfigure}
    \begin{subfigure}[b]{0.3\textwidth}
        \includegraphics[width=\textwidth]{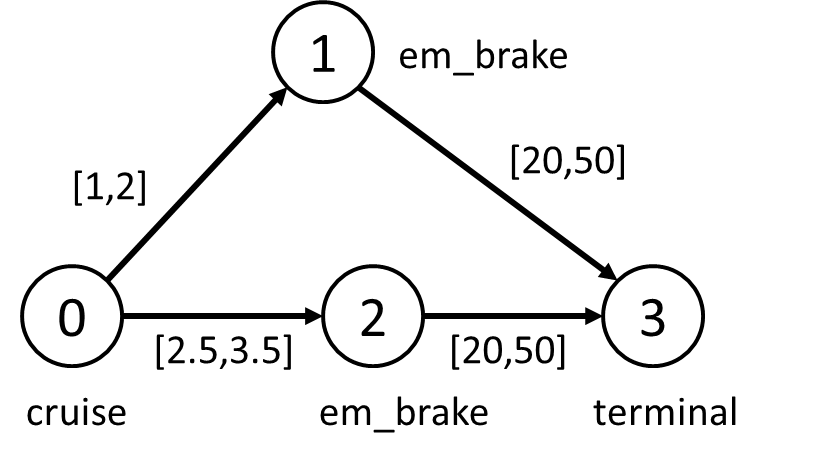}
        \caption{Transition graph $G_2$.}
        \label{fig: em_brake_graphsB}
    \end{subfigure}
    \begin{subfigure}[b]{0.35\textwidth}
	\includegraphics[width=\textwidth]{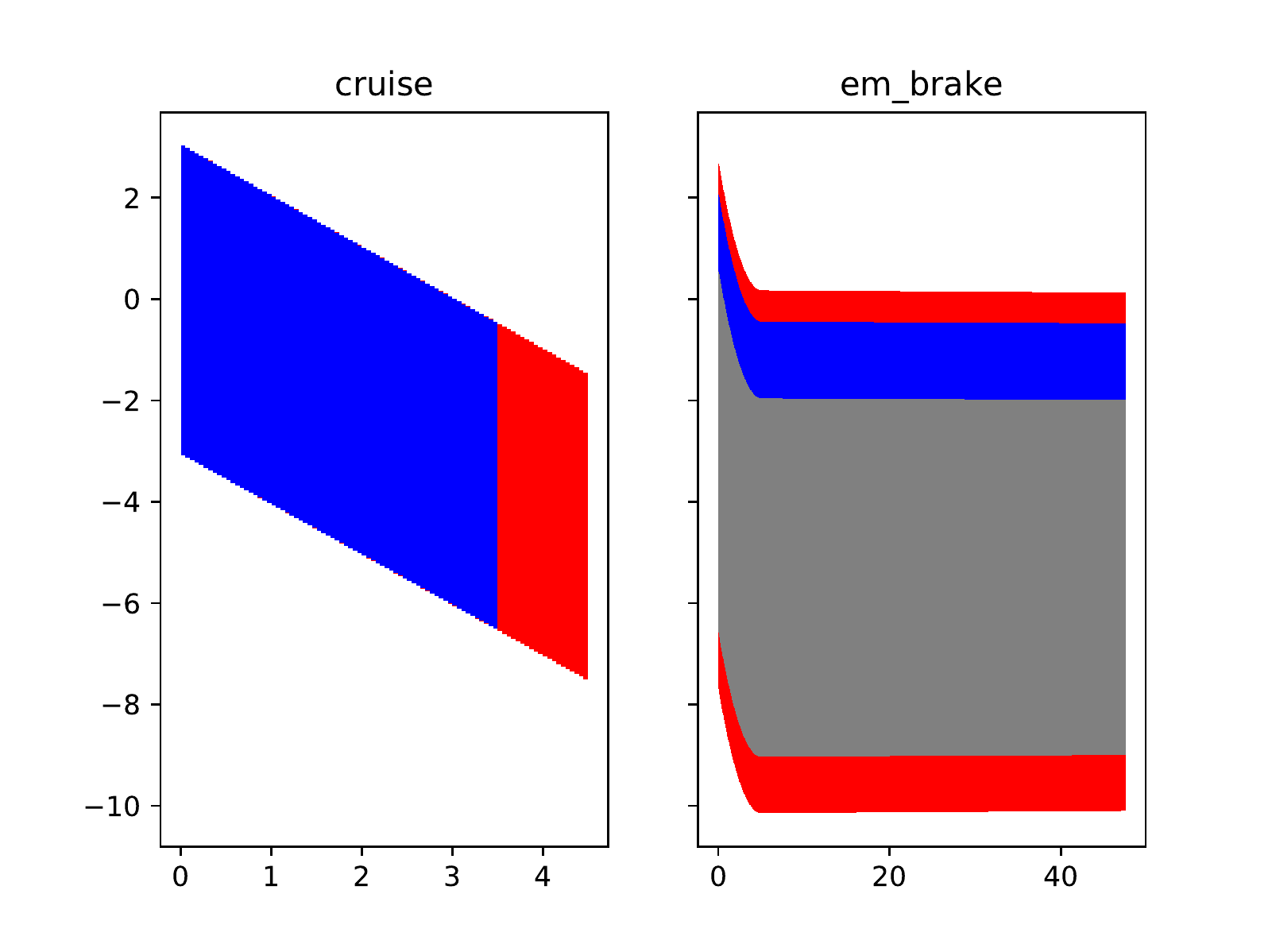}
	\caption{$\auto{AEB}$ Reachtubes.}
	\label{fig:abstraction}
    \end{subfigure}
\caption{\small Graphs and reachtubes for the Automatic Emergency Braking $\auto{AEB}$ system.}
\label{fig: em_brake_graphs}
\end{figure}

To illustrate trace containment reasoning, consider a simpler
graph $G_1$ that allows a single transition of A from \Lmode{cruise}
to \Lmode{em\_brake} over the interval bigger $[0.5, 4.5]$. Using
Proposition~\ref{prop:hybrid-contain} and checking that graph $G_2
\preceq_{\sf id} G_1$, it follows that verifying the safety of
$\auto{AEB}$ with $G_1$ is adequate to infer the safety with $G_2$.
Figure~\ref{fig:abstraction} shows that the safe reachtubes returned
by the algorithm for $G_1$ in red, indeed contain the reachtubes for
$G_2$ (in blue and gray).




\vspace{-10pt}
\paragraph{Sequential composition}
We revisit the $\auto{Powertrn}$ example of
Section~\ref{ex:powertrain}. The initial set $\Theta$ and unsafe set
are the same as in Table \ref{table:results}.  Let $G_A$ be the graph
($v_0$,\Lmode{startup}) $\xrightarrow{[5,10]}$ ($v_1$,\Lmode{normal})
$\xrightarrow{[10,15]}$ ($v_2$,\Lmode{powerup}), and $G_B$ be the
graph ($v_0$,\Lmode{powerup}) $\xrightarrow{[5,10]}$
($v_1$,\Lmode{normal}) $\xrightarrow{[10,15]}$
($v_2$,\Lmode{powerup}).  The graph $G_1 =$ ($v_0$,\Lmode{startup})
$\xrightarrow{[5,10]}$ ($v_1$,\Lmode{normal}) $\xrightarrow{[10,15]}$
($v_2$,\Lmode{powerup}) $\xrightarrow{[5,10]}$ ($v_3$,\Lmode{normal})
$\xrightarrow{[10,15]}$ ($v_4, $\Lmode{powerup}), can be expressed as
the composition $G_1 = G_A \seqcomp G_B$. Consider the two hybrid
systems $\H_i = \langle \L, \Theta_i, G_i, \TL \rangle$, $i \in
\{A,B\}$ with $\Theta_A = \Theta$ and $\Theta_B =
\reach{\H_A}^{v_2}$. {\toolname}'s estimate of $\Theta_{B}$ had
$\lambda$ in the range from $14.68$ to $14.71$. The reachset
$\reach{\H_B}^{v_2}$ computed by {\toolname} had $\lambda$ from
$14.69$ to $14.70$. The remaining variables also were observed to
satisfy the containment condition. Therefore, $\reach{\H_B}^{v_2}
\subseteq \Theta_{B}$.  Consider the two hybrid systems $\H_i =
\langle \L, \Theta, G_i, \TL \rangle$, $i \in \{1,2\}$, where $G_1$ is
(defined above) $ G_A \seqcomp G_B$, and $G_2 = G_A \seqcomp G_B
\seqcomp G_B \seqcomp G_B$.  Using Theorem
\ref{thm:seqcomp-mainresult} it suffices to analyze $\H_1$ to verify
$\H_2$.  $\H_1$ was been proved to be safe by \toolname\ without any
refinement.  As a sanity check, we also verified the safety of $\H_2$.
{\toolname} proved $\H_2$ safe without any refinement as well.

\section{Conclusions}
\label{sec:cs}

The work presented in this paper takes an alternative view that complete mathematical models of hybrid systems are unavailable. Instead, the available system description combines a black-box simulator and a white-box transition graph.
Starting from this point of view, we have developed the semantic framework, a probabilistic verification algorithm, and results on
simulation relations and sequential composition for reasoning about
complex hybrid systems over long switching sequences.
Through modeling and analysis of a number of automotive control systems using implementations of the proposed approach, we hope to have demonstrated their promise.
One direction for further exploration in this vein,
is to consider more general timed and hybrid automata models of the white-box, and develop the necessary algorithms and the reasoning techniques.  

\newpage

\bibliographystyle{plain}
\bibliography{sayan,mahesh}

\newpage

\appendix
\section{Appendix}
\label{app:A}

\subsection{ADAS and autonomous vehicle venchmarks}
\label{app:adas}
We provide more details for the different scenarios used for testing ADAS and Autonomous driving control systems.

Recall, that each vehicle model in \Simulink\ has several continuous variables including the  $x, y$-coordinates of the vehicle on the road, its velocity, heading, steering angle, etc. 
The vehicle can be controlled by two input signals, namely the throttle (acceleration or brake) and the steering speed.
By choosing appropriate values of these input signals, we have defined the following  modes for each vehicle
\begin{inparaenum}[(a)]
	\item \Lmode{cruise}: move forward at constant speed, 
	\Lmode{speedup}: constant acceleration,
	\Lmode{brake}: constant (slow) deceleration,
	\Lmode{em\_brake}: constant (hard).
\end{inparaenum}
We have designed lane switching modes \Lmode{ch\_left} and \Lmode{ch\_right} in which the acceleration and steering are controlled in such a manner that the vehicle switches to its left (resp. right) lane in a certain amount of time. 

For each vehicle, we mainly analyze four variables: absolute position
($sx$) and velocity $vx$ orthogonal to the road direction ($x$-axis),
and absolute position ($sy$) and velocity $vy$ along the road
direction ($x$-axis). The throttle and steering information can be
expressed using the four variables. We will use subscripts to
distinguish between different vehicles.  The following scenarios are
constructed by defining appropriate sets of initial states and
transitions graphs labeled by the modes of two or more vehicles.
\begin{description} 
	\item[$\auto{MergeBehind}$:] 
	Initial condition:  Vehicle A is in left and  vehicle B is in the right lane; initial positions and speeds are in some range;  A is in \Lmode{cruise} mode, and B is in \Lmode{cruise} or \Lmode{speedup}.
	Transition graph:  Vehicle A  goes through the mode sequence \Lmode{speedup}, \Lmode{ch\_right}, \Lmode{cruise} with specified intervals of time to transit from mode to another mode. 
	Requirement: A merges behind B within a time bound and maintains at least a given safe separation.
	\item[$\auto{MergeAhead}$:] 
	Initial condition: Same as  $\auto{MergeBehind}$ with 
	except that B is in \Lmode{cruise} or \Lmode{brake} mode.
	Transition graph: Same structure as  $\auto{MergeBehind}$ with different  timing parameters.
	Requirement: A merges ahead of B and maintains at least a given safe separation. 
	\item[$\auto{AutoPassing}$:]
	Initial condition: Vehicle A behind  B in the same lane, with A in \Lmode{speedup} and B in \Lmode{cruise}; initial positions and speeds are in some range.
	Transition graph:  A goes through the mode sequence \Lmode{ch\_left}, \Lmode{speedup}, \Lmode{brake}, and  \Lmode{ch\_right}, \Lmode{cruise} with specified time intervals in each mode to complete the overtake maneuver. If B switches to
	\Lmode{speedup} before A enters \Lmode{speedup} then
	A aborts and changes back to right lane. If B switches to \Lmode{brake} before A enters \Lmode{ch\_left}, then A should adjust the time to switch to \Lmode{ch\_left} to avoid collision.
	%
	Requirement: Vehicle A overtakes B while maintaining minimal safe separation.
	
	\item[$\auto{AEB}$:]
	(Emergency brakes) 
	Initial condition: Vehicle A behind  B in the same lane with A in \Lmode{cruise}, B is stopped (in \Lmode{cruise} mode with velocity $0$). Initial positions and speeds are in some range;
	Transition graph: A transits from \Lmode{cruise} to \Lmode{em\_brake}  over a given interval of time or several disjoint intervals of time.
	Requirement: Vehicle A stops behind B and maintains at least a given safe separation.
	
	\item[$\auto{MergeBetween}$:]
	Initial condition: Vehicle A, B, C are all in the same lane, with A behind B, B behind C, and in the \Lmode{cruise} mode, initial positions and speeds are in some range.
	Transition graph: A goes through the mode sequence \Lmode{ch\_left}, \Lmode{speedup}, \Lmode{brake}, and  \Lmode{ch\_right}, \Lmode{cruise} with specified time intervals in each mode to overtake B. C transits from \Lmode{cruise} to \Lmode{speedup} then transits back to \Lmode{cruise}, so C is always ahead of A.
	Requirement: Vehicle A merges between B and C and any two vehicles maintain at least a given safe separation.
\end{description}

\subsection{Automatic transmission control}
\label{ssec:gear}
We provide some details about the Automatic transmission control benchmark that we have modeled as a hybrid system that combine white-box and black-box components and we have verified using \toolname's safety verification algorithm.

This is a slightly modified version of the Automatic Transmission model provided by \Mathworks\ as a \Simulink\ demo \cite{Matlab_trans}. 
It is a model of an automatic transmission controller that exhibits both continuous and discrete behavior.
The model has  been previously  used by  S-taliro~\cite{S-Taliro} for falsifying certain requirements. We are not aware of any verification results for this system.

For our experiments, we made some minor modifications to  the \Simulink\ model to create the hybrid system  $\auto{ATS}$. 
This allows us to simulate the vehicle from any one of the four modes, namely,  \Lmode{gear1}, \Lmode{gear2}, \Lmode{gear3} and \Lmode{gear4}.
Although the system has many variables, we are primarily  interested in the car Speed ($v$), engine RPM (Erpm), impeller torque ($T_i$), output torque ($T_o$), and transmission RPM (Trpm), and therefore, use simulations that record these. 
Transition graph of  $\auto{ATS}$ encodes 
transition sequences and intervals for
shifting from \Lmode{gear1} through to \Lmode{gear4}.
Requirement of interest is that the engine RPM is less than a specified maximum value, which in turn is important for limiting the thermal and mechanical stresses on the cylinders and camshafts. Typical unsafe set $\U_t$ could be  Erpm $>4000$.
%

%
%

\subsection{Safety verification algorithm}
\label{appendix:safetyveri}

The safety verification algorithm is shown in~\ref{alg:safetyveri}. It proceeds along the line of the simulation-based verification algorithms presented in~\cite{DMV:EMSOFT2013, FanMitra:2015df, DuggiralaMV:2015c2e2}.

\begin{algorithm}
	\caption{$\mathit{VerifySafety}(\H,\U)$ verifies safety of hybrid system $\H$ with respect to unsafe set $\U$.}
	\label{alg:safetyveri}
	\SetKwInOut{Input}{input}
	\SetKwInOut{Initially}{initially}
	\Initially{$\I.push(Partition(\Theta))$}
	\While {$\I \neq \emptyset$}
	{	
		$S \gets \I.pop()$\;
		$RS \gets \GraphReach(\H)$ \;
		\uIf {$RS \cap \U = \emptyset$}
		{ continue\;}
		\uElseIf {$\exists (x,l,t) \in RT$ s.t. $\langle RT,v \rangle \in RS$ and $(x,l,t) \subseteq \U $}
		{\Return UNSAFE, $\langle RT,v \rangle$}
		\Else
		{
			{$I.push (Partition(S))$ \;}	
			{Or, $G \gets RefineGraph(G)$ \;}
		}
	}
	\Return SAFE
\end{algorithm}

\end{document}